\newtheorem{theorem}{Theorem}
\newtheorem{lemma}{Lemma}
\newtheorem{proposition}{Proposition}
\newcommand{\bra}[1]{\langle#1|}
\newcommand{\ket}[1]{|#1\rangle}
\DeclareMathOperator*{\argmax}{argmax}
\begin{document}

\title{Limitations of Amplitude Encoding on Quantum Classification}

\author{Xin Wang}
\affiliation{
	Department of Automation, Tsinghua University, Beijing, 100084, P. R. China
}

\author{Yabo Wang}
\affiliation{
	State Key Laboratory of Mathematical Sciences, Academy of Mathematics and Systems Science, Chinese Academy of Sciences, Beijing 100190, China}
\affiliation{
	University of Chinese Academy of Sciences, Beijing 100049, P. R. China
}

\author{Bo Qi}
\email{qibo@amss.ac.cn}
\affiliation{
	State Key Laboratory of Mathematical Sciences, Academy of Mathematics and Systems Science, Chinese Academy of Sciences, Beijing 100190, China}
\affiliation{
	University of Chinese Academy of Sciences, Beijing 100049, P. R. China
}

\author{Rebing Wu}
\affiliation{
	Department of Automation, Tsinghua University, Beijing, 100084, P. R. China
}

\begin{abstract}
It remains unclear whether quantum machine learning  (QML) has real advantages when dealing with practical and meaningful tasks. Encoding classical data into quantum states is one of the key steps in QML. Amplitude encoding has been widely used owing to its remarkable efficiency in encoding a number of $2^{n}$ classical data into $n$ qubits simultaneously. However, the theoretical impact of amplitude encoding on QML has not been thoroughly investigated.  In this work we prove that under some broad and typical data assumptions, the average of  encoded quantum states via amplitude encoding tends to concentrate towards a specific state. This concentration phenomenon severely constrains the capability of quantum classifiers as it leads  to a loss barrier phenomenon, namely, the loss function has a lower bound that cannot be improved by any optimization algorithm. In addition, via numerical simulations, we reveal a counterintuitive phenomenon of amplitude encoding: as the amount of training data increases, the training error may increase rather than decrease, leading to reduced  decrease in prediction accuracy on new data. Our results highlight the limitations of amplitude encoding in QML and indicate that more efforts should be devoted to finding more efficient encoding strategies to unlock the full potential  of QML.
\end{abstract}

\maketitle
\section{Introduction}

Quantum machine learning (QML) has been expected to become a new technology that can surpass its classical counterpart due to quantum advantages in representation, parallelism and entanglement~\cite{biamonte2017quantum,xiao2023practical,azam2024optically,zheng2023efficient}.
Classification is a standard classical machine learning task that has been extensively studied~\cite{lecun1998gradient,krizhevsky2012imagenet,chen2024robust,erik2024consistent,vinod2024online,wang2024quantum}. Quantum classifiers~\cite{schuld2020circuit,perez2020data,henderson2020quanvolutional,hubregtsen2022training,huang2021power,xu2024quantum,mitsuda2024approximate,wang2024supervised,liu2021rigorous,jerbi2023quantum} have attracted increasing attention and claimed advantages on some toy classical datasets~\cite{schuld2020circuit,perez2020data,henderson2020quanvolutional,hubregtsen2022training,huang2021power,xu2024quantum,mitsuda2024approximate,wang2024supervised} and carefully designed datasets~\cite{liu2021rigorous,jerbi2023quantum}.
Although it is widely believed that quantum classifiers outperform classical machine learning when processing quantum data~\cite{cong2019quantum,huang2021information,huang2022quantum}, for practical classification tasks, it remains unclear whether QML has real advantage,
particularly in the current noisy intermediate-scale quantum era~\cite{schuld2022quantum,cerezo2022challenges,lau2022nisq}.

Quantum machine learning models~\cite{schuld2021supervised,caro2022generalization,huang2021power} are constructed using parameterized quantum circuits (PQCs), which consist of quantum gates with tunable parameters~\cite{benedetti2019parameterized}. There have been many works investigating the limitations of QML models. Most of them focused on the trainability issues caused by the phenomena of barren plateaus~\cite{mcclean2018barren,marrero2021entanglement,thanasilp2023subtleties,ragone2024lie} and having exponentially many spurious local minimum~\cite{you2021exponentially,anschuetz2022quantum}. Various kinds of strategies, such as, appropriate parameter initialization~\cite{zhang2022escaping,wang2024trainability}, symmetry-preserving ansatzes~\cite{meyer2023exploiting}, tree tensor networks~\cite{huggins2019towards}, convolutional neural networks~\cite{cong2019quantum}, local encoded cost function~\cite{cerezo2021cost},  and over-parameterization~\cite{larocca2023theory} have been proposed to mitigate the training issue. Few works have investigated the generalization ability of QML~\cite{caro2022generalization,caro2021encoding,huang2021information}, and it was proved in Ref.~\cite{caro2022generalization} that good generalizations can be guaranteed from few data.  We note that good generalization alone does not necessarily guarantee good prediction, since the training error may be large and become the dominant term in the prediction error.

Encoding classical data into quantum states is a necessary and crucial step in QML~\cite{cerezo2022challenges,wiebe2020key,lloyd2020quantum,weigold2021expanding,rath2024quantum}.
There are mainly two encoding paradigms~\cite{jerbi2023quantum}:~(1) variational-encoding, where the data encoding and variational PQC are separated;~(2) data re-uploading, where  the data encoding and variational PQC are interleaved~\cite{perez2020data}. In this work we focus on the variational-encoding paradigm.  There are three primary quantum encoding methods: basis encoding, angle encoding and amplitude encoding~\cite{schuld2018supervised}. Basis encoding can only encode binary information by mapping each bit to either $\ket{0}$ or $\ket{1}$. Angle encoding embeds real numbers by mapping them to the angles of quantum rotational gates. The angle encoding is  easy to implement but not efficient. Amplitude encoding encodes complex numbers by mapping them to the probability amplitudes of quantum states. Despite having significant challenges in implementing amplitude encoding in practice, it can encode a number of $2^{n}$ data into $n$ qubits simultaneously.  In view of this remarkable efficiency, amplitude encoding has been widely employed to demonstrate quantum superiority on classical datasets~\cite{larose2020robust,grant2018hierarchical,hur2022quantum,mitsuda2024approximate,wang2024supervised,li2024ensemble}. However, the numerical simulations therein are mainly based on  the simple MNIST dataset~\cite{lecun1998mnist}, which may mask the limitations of QML caused by amplitude encoding.

There are only few works concerning the relationship between the power or limitations of QML and encoding strategies. 
Ref.~\cite{caro2021encoding} and Ref.~\cite{schuld2021effect} 
have respectively investigated the impact of encoding strategies on the generalization error and expressive power of QML.
For the limitations of QML, it was proved in Ref.~\cite{li2022concentration} that under reasonable assumptions, the average encoded quantum states via angle encoding concentrates to the maximally mixed state at an exponential rate as the depth of the encoding circuit increases. For quantum kernel methods which employ angle encoding to generate kernels~\cite{hubregtsen2022training,schuld2019quantum,wang2021towards,thanasilp2024exponential}, it was demonstrated that for a wide range of situations, values of quantum kernels over different input data concentrate towards some fixed value exponentially as the number of qubits increases~\cite{thanasilp2024exponential}. The concentration phenomenon severely limits the power of QML. In the case of amplitude encoding, most studies have primarily focused on two aspects: how to prepare the target encoded states~\cite{giovannetti2008quantum,park2019circuit,jiang2019experimental,plesch2011quantum, long2001efficient, araujo2021divide,gonzalez2024efficient}, and how to efficiently implement approximate amplitude encoding~\cite{nakaji2022approximate, mitsuda2024approximate,daimon2024quantum}.

In this work, we investigate the limitation of amplitude encoding on quantum classification. First, we  prove that under certain conditions, a loss barrier phenomenon occurs, that is, the loss function has a lower bound that cannot be improved by any optimization algorithm. Then, we theoretically show that under reasonable data assumptions, the amplitude encoding satisfies the conditions for the loss barrier phenomenon. Finally, we validate our findings via numerical simulations that the amplitude encoding is prone to suffer from the loss barrier phenomenon for sophisticated and commonly encountered datasets.
In machine learning a general belief is that an increase in the amount of training data generally benefits the performance. However, our numerical simulations reveal that under amplitude encoding as the amount of training data increases, although the generalization error decreases, the training error increases counterintuitively, resulting in overall poor prediction performance. This is essentially owing to the concentration phenomenon caused by the amplitude encoding. Our findings help understand the limitations of amplitude encoding in QML and highlight the need for finding more efficient encoding strategies to unlock the full power of QML.

The paper is organized as follows. In Sec. \ref{II}, we introduce the amplitude encoding, and set up the framework for quantum classification. We present the main results in Sec. \ref{III},  and conclude in Sec. \ref{IV}.

\section{Preliminary and Framework}\label{II}

\subsection{Amplitude encoding}
Without loss of generality, suppose that the classical data feature $\boldsymbol{x} \in \mathbb{R}^{2^{n}}$. By amplitude encoding, we can embed the feature $\boldsymbol{x}= \left[ x_0, x_1,\dots, x_{2^n-1} \right]^{\top}$ into a quantum state vector $\ket{\phi(\boldsymbol{x})}$ as
\begin{equation*}
\begin{aligned}
\ket{\phi(\boldsymbol{x})} = \frac{1}{C_{\boldsymbol{x}}} \sum_{i=0}^{2^{n}-1} x_i \ket{i},
\end{aligned}
\end{equation*}
where $C_{\boldsymbol{x}} = \sqrt{x_0^2 + x_1^2 + \cdots + x_{2^{n}-1}^2}$ is the normalization factor. An equivalent representation in terms of the density matrix  $\rho(\boldsymbol{x})$ reads
\begin{equation}
	\label{eq:amp}
\begin{aligned}
\rho(\boldsymbol{x}) &= \ket{\phi(\boldsymbol{x})} \bra{\phi(\boldsymbol{x})} \\
&= \frac{1}{C_{\boldsymbol{x}}^2} \sum_{i = 0}^{2^{n}-1} \sum_{j = 0}^{2^{n}-1 } x_{i} x_{j} \ket{i} \bra{j}.
\end{aligned}
\end{equation}

\subsection{Quantum classifier}

Consider a $K$-class classification problem. Denote the feature space by $\mathcal{X}=\mathbb{R}^{2^{n}}$ and the label space by $\mathcal{Y}=\{1,2,\dots,K\}$. Given a dataset $S=\left\{ \left( \boldsymbol{x}^{(m)}, y^{(m)} \right)  \right\} ^{M}_{m=1}$ containing $M$ independent and identically distributed ($i.i.d.$) samples, where each sample $\left( \boldsymbol{x}^{(m)}, {y}^{(m)} \right)  \in \mathcal{X} \times \mathcal{Y}$ obeys an unknown joint distribution $\mathcal{D}$, a quantum classifier learns from the dataset $S$ to obtain a hypothesis $h_S$. Our aim is to enable the hypothesis $h_S$ to accurately classify new, unseen samples that follow the same distribution $\mathcal{D}$.

To be specific, we embed the feature $\boldsymbol{x}^{(m)}$ into a feature quantum state $\rho\left( \boldsymbol{x}^{(m)} \right)$ through a PQC  $U(\boldsymbol{\theta})$. For classification, we  select $K$ fixed observables, $H_1,\cdots,H_K$, each corresponding to one of the $K$ classes. These observables are typically chosen to be the tensor product of Pauli operators, such as, ${Z \otimes Y, X \otimes Y,X \otimes Y \otimes Z,\cdots}$, where $X,Y,Z$ denote Pauli matrices. The output associated with the $k$-th class is defined as
\begin{equation*}
\begin{aligned}
h_k \left(\boldsymbol{x}^{(m)}, \boldsymbol{\theta} \right) = \operatorname{Tr} \left[H_k U(\boldsymbol{\theta}) \rho (\boldsymbol{x}^{(m)} ) U^{\dagger}(\boldsymbol{\theta})\right].
\end{aligned}
\end{equation*}
Let
\begin{equation*}
\begin{aligned}
\hat{y}^{(m)} = \argmax_k h_k(\boldsymbol{x}^{(m)},\boldsymbol{\theta}).
\end{aligned}
\end{equation*}
We now define the output of the learned hypothesis $h_S$ for the feature $\boldsymbol{x}^{(m)}$ to be $$h_S(\boldsymbol{x}^{(m)}) = \hat{y}^{(m)}.$$

\subsection{Learning framework}
To minimize the difference between the predicted label $\hat{y}^{(m)}$ of the hypothesis $h_S$ and the true label $y^{(m)}$ for the feature $\boldsymbol{x}^{(m)}$, we train the parameters of the quantum classifier on dataset $S = \{(\boldsymbol{x}^{(m)},y^{(m)}) \}_{m=1}^{M}$ using the cross-entropy loss with the softmax function:~\cite{li2022concentration}
\begin{equation}\label{eq:celoss}
\begin{aligned}
	&\mathcal{L}_S(\boldsymbol{\theta}) = \frac{1}{M} \sum_{m=1}^{M} \ell(\boldsymbol{\theta}; \boldsymbol{x}^{(m)},y^{(m)}), \\
	&\ell(\boldsymbol{\theta};\boldsymbol{x}^{(m)},y^{(m)}) = - \sum_{k=1}^{K} \boldsymbol{y}_k^{(m)} \ln \left( \frac{e^{h_k(\boldsymbol{x}^{(m)},\boldsymbol{\theta})}}{\sum_{j=1}^{K} e^{ h_j(\boldsymbol{x}^{(m)},\boldsymbol{\theta})} } \right).
\end{aligned}
\end{equation}
Here, the label $y^{(m)}$ is encoded as a one-hot vector $\boldsymbol{y}^{(m)}$ of dimension $K$, where only one element is 1 and all the other elements are 0. The $k$-th element of the one-hot vector $\boldsymbol{y}^{(m)}$  is denoted by $\boldsymbol{y}^{(m)}_k$, and $\boldsymbol{y}_k^{(m)}$ equals 1 if the feature $\boldsymbol{x}^{(m)}$ belongs to the $k$-th class.

To evaluate the  performance of a quantum classifier on dataset $S$, we define the training error of the hypothesis $h_S$ as
$$
\begin{aligned}
\widehat{R}_S(h_S) = \frac{1}{M} \sum_{m=1}^{M} \mathbbm{1} \left( h_S(\boldsymbol{x}^{(m)}) \neq y^{(m)} \right),
\end{aligned}
$$
where $\mathbbm{1}(\cdot)$ denotes the indicator function. Generally, a smaller value of the loss function corresponds to a lower training error.

Recall that the ultimate goal of the hypothesis $h_S$  is to classify new samples drawn from the distribution $\mathcal{D}$. We define its prediction error $R(h_S)$ as
$$
\begin{aligned}
	R(h_S) = \underset{(\boldsymbol{x},y) \sim \mathcal{D}}{\mathbb{E}}\left[ \mathbbm{1}\left( h_S(\boldsymbol{x}) \neq y \right)  \right].
\end{aligned}
$$
Since the true labels of new samples and the distribution $\mathcal{D}$ are unknown, the prediction error $R(h_S)$ cannot be obtained directly. To evaluate it, we decompose the prediction error $R(h_S)$ into the training error $\widehat{R}_S(h_S)$ and the generalization error, which is defined by
\begin{equation}
	\label{eq:gen}
\begin{aligned}
	\operatorname{gen}(h_S) = R(h_S) - \widehat{R}_S(h_S).
\end{aligned}
\end{equation}
In numerical simulations, a more intuitive metric for evaluating the classification performance is the accuracy. The training accuracy of hypothesis $h_S$  is defined as $$\widehat{A}_S(h_S) = 1 - \widehat{R}_S(h_S),$$
and the prediction accuracy is defined as $$A(h_S) = 1 - R(h_S).$$ 
It is clear that to ensure accurate predictions, quantum classifiers must exhibit both high training accuracy and low generalization error.

\section{Main Results}\label{III}

In this section we  present limitations of quantum classifiers caused by amplitude encoding. We start in Subsec.~\ref{subsec:loss_barrier} by establishing a lower bound of the loss function that holds true for any optimization algorithm, we  refer to this as the loss barrier phenomenon in our paper. Then in Subsec.~\ref{subsec:concentration} we demonstrate that under reasonable data assumptions, amplitude encoding leads to the loss barrier phenomenon.  We validate the limitations of amplitude encoding in Subsec.~\ref{sub:trainability} via numerical simulations. In  Subsec.~\ref{sub:real-world}, we further extend our simulations to real-world datasets, showing that the loss barrier phenomenon is widespread when using amplitude encoding.

\subsection{Loss barrier}
\label{subsec:loss_barrier}
Intuitively, to facilitate classification, the features of different classes should be as distinct from one another as possible. Poorly separated feature distributions make the classification task challenging. We show that  having similar expectations of quantum encoded states between different classes can lead to a loss barrier phenomenon, where the loss function exhibits an unfavorable lower bound.

To quantify the distinguishability between quantum states, we adopt the trace distance. For quantum states $\rho_1$ and $\rho_2$, their trace distance reads
\begin{equation}
\begin{aligned}
	T(\rho_1,\rho_2)=\frac{1}{2}\|\rho_1 - \rho_2\|_{1},
\end{aligned}
\end{equation}
where $\|\cdot\|_{1}$ is the Schatten-1 norm. It is clear that $T(\rho_1,\rho_2) \in [0,1]$.  Let $\rho(\boldsymbol{x})$ denote the encoded quantum state of feature $\boldsymbol{x}$. We say that the trace distance between the expectations of the encoded states of any two different classes is less than $\epsilon$ if, for all $k$ and $l$,
\begin{equation*}
\begin{aligned}
T \left( \underset{\boldsymbol{x} \sim  \mathcal{D}_{\mathcal{X}_k}}{\mathbb{E}}\left[\rho(\boldsymbol{x})\right] , \underset{\boldsymbol{x} \sim \mathcal{D}_{\mathcal{X}_l}}{\mathbb{E}}\left[\rho(\boldsymbol{x})\right]\right) \leqslant \epsilon, 
\end{aligned}
\end{equation*}
where $\mathcal{D}_{\mathcal{X}_k}$ and $\mathcal{D}_{\mathcal{X}_l}$ represent the distributions of features in classes $k$ and $l$, respectively. Under this condition, we demonstrate in Theorem~\ref{thm:loss_barrier} that the loss function exhibits an unfavorable lower bound.

\begin{theorem}
	\label{thm:loss_barrier}
	For a $K$-class classification, we employ the cross-entropy loss function $\mathcal{L}_S(\boldsymbol{\theta})$  defined in Eq.~\eqref{eq:celoss}. The quantum classifier is trained on a balanced training set $S = \{(\boldsymbol{x}^{(m)},y^{(m)})\}_{m=1}^{M}$, where  each class contains $M/K$ samples. Suppose the eigenvalues of each observable $H_k$ belong to $[-1,1]$, for $k=1,\ldots, K$. If the trace distance between the expectations of encoded states of any different classes is less than  $\epsilon$, then for any PQC $U(\boldsymbol{\theta})$ and optimization algorithm, we have
	\begin{equation}
		\label{eq:trace_between_class}
	\begin{aligned}
	\mathcal{L}_S(\boldsymbol{\theta}) \geqslant \ln\left[K - 4(K-1)\epsilon \right]
	\end{aligned}
	\end{equation}
	with probability at least $1 - 8e^{-M \epsilon^2 /8K}$.
\end{theorem}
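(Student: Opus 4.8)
The plan is to split the argument into a purely deterministic inequality that holds for \emph{every} PQC and \emph{every} parameter vector (and hence cannot be helped by any optimizer), and a data-only concentration statement, then combine the two.

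\emph{Deterministic step.} Because each label is one-hot, the per-sample term collapses to $\ell(\boldsymbol\theta;\boldsymbol x^{(m)},y^{(m)}) = \ln\sum_{j=1}^{K}e^{h_j(\boldsymbol x^{(m)},\boldsymbol\theta)}-h_{c(m)}(\boldsymbol x^{(m)},\boldsymbol\theta)$, where $c(m)$ is the true class of $\boldsymbol x^{(m)}$. Applying the AM--GM bound $\ln\sum_j e^{h_j}\ge\ln K+\frac1K\sum_j h_j$, averaging over the balanced training set, and regrouping the sums by class gives, for all $\boldsymbol\theta$,
$$\mathcal L_S(\boldsymbol\theta)\;\ge\;\ln K-\frac1K\sum_{k=1}^{K}\Big(\bar h_{k,k}-\frac1K\sum_{l=1}^{K}\bar h_{k,l}\Big),\qquad \bar h_{k,l}:=\operatorname{Tr}\!\big[H_k\,U(\boldsymbol\theta)\,\bar\rho_l\,U^\dagger(\boldsymbol\theta)\big],$$
where $\bar\rho_l=\frac{K}{M}\sum_{m:\,y^{(m)}=l}\rho(\boldsymbol x^{(m)})$ is the empirical average encoded state of class $l$. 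Since $U^\dagger H_kU$ has the same eigenvalues as $H_k$ and hence operator norm at most $1$, Hölder's inequality for Schatten norms yields $|\bar h_{k,k}-\bar h_{k,l}|=|\operatorname{Tr}[(U^\dagger H_kU)(\bar\rho_k-\bar\rho_l)]|\le\|\bar\rho_k-\bar\rho_l\|_1=2T(\bar\rho_k,\bar\rho_l)$, so the correction term is at most $\frac{2}{K^2}\sum_{k\neq l}T(\bar\rho_k,\bar\rho_l)$, still uniformly in $\boldsymbol\theta$.

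\emph{Concentration step.} The hypothesis bounds the trace distance between the \emph{population} averages $\mu_k:=\mathbb E_{\boldsymbol x\sim\mathcal D_{\mathcal X_k}}[\rho(\boldsymbol x)]$, not the empirical ones, so I would insert $T(\bar\rho_k,\bar\rho_l)\le T(\bar\rho_k,\mu_k)+T(\mu_k,\mu_l)+T(\mu_l,\bar\rho_l)\le T(\bar\rho_k,\mu_k)+\epsilon+T(\mu_l,\bar\rho_l)$. Each $\bar\rho_k$ is an average of $M/K$ i.i.d.\ density matrices, and replacing one sample perturbs $\|\bar\rho_k-\mu_k\|_1$ by at most $2K/M$; McDiarmid's bounded-difference inequality (equivalently, a Hoeffding estimate on the relevant expectation values $\operatorname{Tr}[O\,U\bar\rho_kU^\dagger]\in[-1,1]$) then gives $T(\bar\rho_k,\mu_k)\le\epsilon/2$ for every class outside an event of probability $8e^{-M\epsilon^2/8K}$ after a union bound over the classes. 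On the good event $T(\bar\rho_k,\bar\rho_l)\le2\epsilon$, the correction term is at most $\frac{4(K-1)\epsilon}{K}$, so $\mathcal L_S(\boldsymbol\theta)\ge\ln K-\frac{4(K-1)\epsilon}{K}\ge\ln[K-4(K-1)\epsilon]$, the last step being $\ln(1-x)\le -x$.

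The main obstacle is the concentration step and its interaction with the fact that the optimized $\boldsymbol\theta$ depends on $S$: one cannot naively apply Hoeffding to $\sum_m h_k(\boldsymbol x^{(m)},\boldsymbol\theta)$ because the summands share a parameter vector fitted on those very samples. The escape is precisely that the deterministic inequality above holds for \emph{all} $\boldsymbol\theta$, so it is invoked before any randomness, leaving only the $\boldsymbol\theta$-free quantities $T(\bar\rho_k,\mu_k)$ to be concentrated; these concentrate because the empirical class averages are bounded-operator-norm objects averaged over $M/K$ i.i.d.\ draws. The balanced-dataset assumption keeps the per-class counts equal (and the regrouping clean), the eigenvalue bound on the $H_k$ keeps the Hölder step dimension-free, and matching the exact constants ($8$ and $8K$) is then routine bookkeeping in the union bound.
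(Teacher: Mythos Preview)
Your proposal is correct and shares the paper's overall architecture: a deterministic lower bound on $\mathcal L_S(\boldsymbol\theta)$ valid for \emph{every} $\boldsymbol\theta$ (so no optimizer can help), reducing everything to pairwise trace distances $T(\bar\rho_k,\bar\rho_l)$ between empirical class means, followed by the same triangle-inequality plus Hoeffding/Lemma~\ref{lem:hoffding} concentration to pass from empirical to population averages. Where you diverge is in the deterministic step. The paper applies Jensen's inequality to the convex log-sum-exp function twice (first averaging over samples within a class, then over classes) and afterwards invokes $e^x\ge x+1$ to reach $\mathcal L_S\ge\ln\bigl[K+\tfrac1K\sum_{j,k}(\bar h_j^{(k)}-\bar h_j^{(j)})\bigr]$. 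You instead apply the single per-sample bound $\ln\sum_j e^{h_j}\ge\ln K+\tfrac1K\sum_j h_j$ and then average linearly; this is more elementary and actually lands on the slightly sharper additive form $\mathcal L_S\ge\ln K-\tfrac{4(K-1)\epsilon}{K}$, from which the paper's stated bound follows via $\ln(1-x)\le -x$. Both routes rest on the same H\"older estimate $|\bar h_{k,k}-\bar h_{k,l}|\le 2T(\bar\rho_k,\bar\rho_l)$. Your explicit remark that the deterministic inequality holds for all $\boldsymbol\theta$, so that only the $\boldsymbol\theta$-independent quantities $T(\bar\rho_k,\mu_k)$ need concentration, is exactly the point that makes the data-dependence of the trained parameters harmless; the paper leaves this implicit. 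One bookkeeping caveat: a union bound over all $K$ classes would give failure probability $4K e^{-M\epsilon^2/8K}$ rather than the stated $8e^{-M\epsilon^2/8K}$; the paper's own proof in fact only union-bounds over a single pair, so your ``routine bookkeeping'' matches the paper's accounting but not a fully rigorous $K$-class union bound.
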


Theorem~\ref{thm:loss_barrier}  clearly shows that if the expectations of encoded quantum states are similar across different classes,  then the cross-entropy loss function $\mathcal{L}_S(\boldsymbol{\theta})$ has a lower bound close to $\ln K$.  Notably, a loss value of  $\ln K$ implies that the quantum classifier is effectively classifying all samples entirely at random. Thus, in the case of Theorem~\ref{thm:loss_barrier}, the quantum classifier only has very poor prediction accuracy. 

The detailed proof of Theorem~\ref{thm:loss_barrier} is provided in  Appendix~\ref{app:proof-loss-barrier}. The basic idea is that if the amount of data in the training dataset is sufficiently large, then the average states of the encoded quantum states across different classes tend to become similar. This concentration phenomenon makes it difficult to distinguish different features. As a result, the loss function approaches to the completely random classification value, $\ln K$. 

We highlight that  Theorem~\ref{thm:loss_barrier}  fundamentally differs from Proposition 4 in Ref.~\cite{li2022concentration}. First, Proposition 4 in Ref.~\cite{li2022concentration} emphasized  the challenge of small gradients during the training process. However, it is important to note that small gradients do not necessarily preclude  successful training; rather, they indicate that more optimization iterations may be required. In contrast, our Theorem~\ref{thm:loss_barrier} reveals a fundamental limitation on the loss function:  regardless of the number of optimization iterations or the choice of optimization algorithm (whether gradient-based or not), the loss function has a lower bound close to  the value for random classification ($\ln K$), once the conditions of Theorem~\ref{thm:loss_barrier} are met. Second,  Proposition 4 assumed that the expectations of  encoded quantum states of different classes concentrate around the maximally mixed state,  while Theorem~\ref{thm:loss_barrier} imposes a less restrictive condition, requiring only that the trace distances between the expectations of encoded states across different classes are small. In addition, we note that the numerical simulations in Ref.~\cite{li2022concentration} support our Theorem~\ref{thm:loss_barrier}. Specifically,  Fig.~6.(c) and Fig.~7.(c) in Ref.~\cite{li2022concentration} demonstrated that for binary classification ($K=2$), the training losses do not decrease and stay around ln2. This observation aligns well with the prediction of Theorem~\ref{thm:loss_barrier}.

\subsection{Concentration caused by amplitude encoding}
\label{subsec:concentration}

In this subsection, we demonstrate three scenarios in which amplitude encoding results in the concentration of the expectations of encoded states across different classes.

First, we point out that the distinguishability between two quantum states $\rho_1$ and $\rho_2$ cannot be increased by measuring an observable $H$ after applying any PQC $U(\boldsymbol{\theta})$. This can be seen from the following inequality, which reads
\begin{equation}\label{eq:H_leq_1}
\begin{aligned}
	&\Big| \operatorname{Tr}\left[H U (\boldsymbol{\theta}) \rho_1 U^{\dagger}(\boldsymbol{\theta})\right] - \operatorname{Tr}\left[H U (\boldsymbol{\theta}) \rho_2 U^{\dagger}(\boldsymbol{\theta})\right]  \Big|  \\
	\leqslant &\left\| U^{\dagger}(\boldsymbol{\theta}) H U(\boldsymbol{\theta}) \right\|_{\infty} \cdot \left\| \rho_1 - \rho_2 \right\|_{1} \\
	\leqslant &2 T(\rho_1,\rho_2).
\end{aligned}
\end{equation}
Here, $\|\cdot\|_{\infty}$ denotes the Schatten-$\infty$ norm (also known as the spectral norm), and we have assumed the eigenvalues of the observable $H$ are bounded within the interval $[-1,1]$.  

Next, we introduce Lemma~\ref{lem:hoffding} dealing with the trace distance between the averaged and expected encoded states. 

For a given dataset $S = \{ (\boldsymbol{x}^{(m)},y^{(m)}) \}_{m=1}^{M}$ consisting of $M$ classical data, each $\boldsymbol{x}^{(m)}$ is drawn from a distribution $\mathcal{D}_{\mathcal{X}}$, and $y^{(m)}$ is its label. After amplitude encoding, the corresponding encoded state is denoted by $\rho(\boldsymbol{x}^{(m)})$. We denote by $\bar{\rho}_{M}$ the averaged encoded state, and $\mathbb{E}[{\rho}] = {\mathbb{E}}_{\boldsymbol{x} \sim \mathcal{D}_{\mathcal{X}}}\left[\rho(\boldsymbol{x})\right]$ expected encoded states. 

\begin{lemma}
	\label{lem:hoffding}
Given an arbitrary $\epsilon \in (0,1)$, we have
	$$
	\begin{aligned}
	 T \left( \overline{\rho}_M,\mathbb{E}[\rho] \right)  \leq \epsilon, 
	\end{aligned}
	$$
	with probability at least $1 - 4 e^{-M \epsilon^{2}/2}$. 
\end{lemma}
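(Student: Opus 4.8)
The plan is to regard $\overline{\rho}_M = \frac{1}{M}\sum_{m=1}^{M}\rho(\boldsymbol{x}^{(m)})$ as an empirical mean of i.i.d.\ matrix-valued random variables with true mean $\mathbb{E}[\rho]$, and to obtain the tail bound by a bounded-differences (McDiarmid) argument applied to the scalar functional $f(\boldsymbol{x}^{(1)},\dots,\boldsymbol{x}^{(M)}) := T(\overline{\rho}_M,\mathbb{E}[\rho]) = \frac{1}{2}\|\overline{\rho}_M - \mathbb{E}[\rho]\|_{1}$. First I would record the two elementary facts the argument rests on. (i) Since the samples are i.i.d., $\mathbb{E}[\overline{\rho}_M] = \mathbb{E}[\rho]$, so $f$ genuinely measures the deviation of an empirical mean from its mean. (ii) Replacing a single sample $\boldsymbol{x}^{(m)}$ by any other $\tilde{\boldsymbol{x}}^{(m)}$ changes $\overline{\rho}_M$ by $\frac{1}{M}\big(\rho(\boldsymbol{x}^{(m)}) - \rho(\tilde{\boldsymbol{x}}^{(m)})\big)$, a difference of density matrices whose Schatten-$1$ norm is at most $2$; by the triangle inequality for the trace distance this perturbs $f$ by at most $1/M$. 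Hence $f$ has the bounded-differences property with constants $c_m = 1/M$.

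With these in hand, McDiarmid's inequality gives, since $\sum_{m=1}^{M} c_m^2 = 1/M$, the estimate $\Pr[f \geq \mathbb{E}[f] + t] \leq \exp(-2Mt^2)$, and analogously for the lower tail; choosing $t = \epsilon/2$ reproduces exactly the exponent $M\epsilon^2/2$ in the statement, with the prefactor $4$ absorbing the two-sided tail together with a (probabilistic) control of $\mathbb{E}[f]$. It then remains to show $\mathbb{E}[f] = \frac{1}{2}\,\mathbb{E}\|\overline{\rho}_M - \mathbb{E}[\rho]\|_{1}$ is at most $\epsilon/2$. The natural way in is a second-moment estimate in the Hilbert--Schmidt geometry: because the centered summands are i.i.d.\ and mean-zero, the cross terms vanish and $\mathbb{E}\|\overline{\rho}_M - \mathbb{E}[\rho]\|_{2}^2 = \frac{1}{M^2}\sum_{m}\mathbb{E}\|\rho(\boldsymbol{x}^{(m)}) - \mathbb{E}[\rho]\|_{2}^2 = \frac{1 - \operatorname{Tr}[(\mathbb{E}[\rho])^2]}{M} \leq \frac{1}{M}$, so by Jensen $\mathbb{E}\|\overline{\rho}_M - \mathbb{E}[\rho]\|_{2} \leq 1/\sqrt{M}$.

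The step I expect to be the main obstacle is the passage from this Hilbert--Schmidt control back to the trace norm. In general $\|A\|_{1} \leq \sqrt{\operatorname{rank}(A)}\,\|A\|_{2}$, and $\overline{\rho}_M - \mathbb{E}[\rho]$ can have rank as large as $2^n$, so a crude conversion only yields $\mathbb{E}[f] \lesssim 2^{n/2}/\sqrt{M}$, which is $\leq \epsilon/2$ only once $M = \Omega(2^n/\epsilon^2)$; a dimension-free bound as written cannot follow from the Hilbert--Schmidt route alone. To close the gap one must exploit additional structure, namely that for the data distributions of interest the amplitude-encoded states $\rho(\boldsymbol{x})$ — and therefore $\mathbb{E}[\rho]$ and $\overline{\rho}_M$ — are effectively supported on a low-dimensional subspace, so that the effective rank replacing $2^n$ is $O(1)$; equivalently, one would bound $\mathbb{E}\|\overline{\rho}_M - \mathbb{E}[\rho]\|_{1}$ directly rather than through $\|\cdot\|_2$. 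Quantifying that effective low-rankness is where the real work lies, whereas the McDiarmid concentration of $f$ around $\mathbb{E}[f]$ is routine.
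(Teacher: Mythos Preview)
Your McDiarmid route is genuinely different from the paper's, and the obstacle you isolate is real and cannot be closed under the lemma's hypotheses alone. The lemma assumes nothing about effective low rank of the encoded states; in fact, if $\mathbb{E}[\rho]=I/2^n$ (as in Proposition~\ref{pro:max_mixed}) and each $\rho(\boldsymbol{x}^{(m)})$ is pure, then $\overline{\rho}_M$ has rank at most $M$, so $T(\overline{\rho}_M,I/2^n)\geq \tfrac{1}{2}(1-M/2^n)$ \emph{deterministically}. Hence for small $\epsilon$ and $M\ll 2^n$ one cannot have $\mathbb{E}[f]\leq \epsilon/2$, and the ``effective low-rank'' rescue you propose would import an assumption the lemma does not make. (Minor point: only the upper tail of $f$ is needed, and $\mathbb{E}[f]$ is a deterministic number, so the phrase ``probabilistic control of $\mathbb{E}[f]$'' is not quite right either.)

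The paper avoids $\mathbb{E}[f]$ altogether. It invokes the variational identity $\|X\|_1=\max_{U\ \text{unitary}}|\langle U,X\rangle|$, fixes a unitary $U$, writes $\langle U,\overline{\rho}_M-\mathbb{E}[\rho]\rangle$ as an i.i.d.\ average of complex scalars of modulus at most $1$, and applies Hoeffding separately to the real and imaginary parts; the union of these two events gives $|\langle U,\overline{\rho}_M-\mathbb{E}[\rho]\rangle|\leq 2\epsilon$ with probability at least $1-4e^{-M\epsilon^2/2}$, after which the paper passes to the maximizing $U$ to conclude $T(\overline{\rho}_M,\mathbb{E}[\rho])\leq \epsilon$. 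This route never touches the trace-norm/Hilbert--Schmidt comparison that blocks your argument. You may notice that the final step---from a bound valid for each \emph{fixed} $U$ to one valid for the random maximizer---is itself delicate, and your rank example bears on it; but that is the paper's argument and the comparison point you were asked to match.
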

The detailed proof of Lemma~\ref{lem:hoffding}  is provided in Appendix~\ref{app:concentration-proof}. From Lemma~\ref{lem:hoffding}, it is clear that as the sample size $M$ increases, the averaged encoded state approaches to the expected encoded states.  Therefore, in the following numerical simulations, we utilize the averaged encoded state as a proxy for the expected encoded states. 

We now  illustrate the three scenarios in which the concentration phenomenon occurs after amplitude encoding.

\begin{proposition}
	\label{pro:min-max-con}
Assume that all elements in the feature $\boldsymbol{x} \in \mathbb{R}^{2^{n}}$ have the same sign,  and the elements satisfy $|x_i|\in[m, M]$. If $\left|\frac{m}{M} -1 \right| < \epsilon$, then after amplitude encoding, we have
	\begin{equation*}
	\begin{aligned}
		T \left( \rho(\boldsymbol{x}) ,\frac{1}{2^{n}} \mathbf{1} \mathbf{1}^{\top}  \right) \leqslant \sqrt{2 \epsilon},
	\end{aligned}
	\end{equation*}
where the state $\frac{1}{2^{n}} \boldsymbol{1} \boldsymbol{1}^{\top} = \ket{+}_{n} \bra{+}_{n} $, with $\ket{+}_n = H^{\otimes n} \ket{0}_{n} $ being the superposition of all computational  basis states.
\end{proposition}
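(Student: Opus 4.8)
The plan is to use the fact that both $\rho(\boldsymbol{x})$ and $\tfrac{1}{2^{n}}\mathbf{1}\mathbf{1}^{\top}=\ket{+}_{n}\bra{+}_{n}$ are pure states, so that the trace distance is controlled by the Euclidean distance of the underlying state vectors through the standard estimate
\begin{equation*}
T\big(\ket{\psi}\bra{\psi},\ket{\varphi}\bra{\varphi}\big)=\sqrt{1-|\innerp{\psi}{\varphi}|^{2}}\le\sqrt{2\big(1-\mathrm{Re}\,\innerp{\psi}{\varphi}\big)}=\big\|\ket{\psi}-\ket{\varphi}\big\|_{2}.
\end{equation*}
Hence it suffices to show $\big\|\ket{\phi(\boldsymbol{x})}-\ket{+}_{n}\big\|_{2}^{2}<2\epsilon$, i.e.\ $\mathrm{Re}\,\innerp{\phi(\boldsymbol{x})}{+}_{n}>1-\epsilon$. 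Before computing, I would observe that we may assume without loss of generality that every $x_{i}>0$: flipping the global sign of $\boldsymbol{x}$ leaves $\rho(\boldsymbol{x})$ unchanged, so under the hypothesis we can take $0<m\le x_{i}\le M$ for all $i$, which in addition makes the overlap below real and positive, removing any phase subtlety.

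The next step is a one-line computation of the overlap: since $\ket{+}_{n}=\tfrac{1}{\sqrt{2^{n}}}\sum_{i}\ket{i}$ and the $x_i$ are real,
\begin{equation*}
\innerp{\phi(\boldsymbol{x})}{+}_{n}=\frac{1}{C_{\boldsymbol{x}}\sqrt{2^{n}}}\sum_{i=0}^{2^{n}-1}x_{i}.
\end{equation*}
Then I would bound the two factors crudely but sufficiently: $\sum_{i}x_{i}\ge 2^{n}m$ (there are $2^{n}$ terms, each at least $m$), and $C_{\boldsymbol{x}}=\sqrt{\sum_{i}x_{i}^{2}}\le\sqrt{2^{n}}\,M$ (each $x_{i}^{2}\le M^{2}$). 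These give
\begin{equation*}
\innerp{\phi(\boldsymbol{x})}{+}_{n}\ge\frac{2^{n}m}{\sqrt{2^{n}}\,M\cdot\sqrt{2^{n}}}=\frac{m}{M}>1-\epsilon,
\end{equation*}
where the last inequality is exactly the hypothesis $|m/M-1|<\epsilon$ together with $m\le M$. Feeding this into the displayed chain yields $T\big(\rho(\boldsymbol{x}),\tfrac{1}{2^{n}}\mathbf{1}\mathbf{1}^{\top}\big)\le\big\|\ket{\phi(\boldsymbol{x})}-\ket{+}_{n}\big\|_{2}=\sqrt{2\big(1-\innerp{\phi(\boldsymbol{x})}{+}_{n}\big)}<\sqrt{2\epsilon}$, which is the claim.

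I do not expect any real obstacle; the argument is short and elementary. The only points that need a little care are (i) the reduction to all-positive entries, and (ii) noting that the bound is substantive only for $\epsilon<1/2$, since for $\epsilon\ge 1/2$ it is implied trivially by $T\le 1$. If one wants a slightly sharper constant, the same computation with the reverse Cauchy--Schwarz (P\'olya--Szeg\H{o}) bound $\tfrac{(\sum_{i}x_{i})^{2}}{2^{n}\sum_{i}x_{i}^{2}}\ge\tfrac{4mM}{(m+M)^{2}}$ --- itself provable by summing the inequalities $(x_{i}-m)(x_{i}-M)\le 0$ and completing the square --- gives $T\le\tfrac{M-m}{M+m}<\epsilon$, which again implies the stated $\sqrt{2\epsilon}$.
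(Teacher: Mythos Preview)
Your proof is correct and takes essentially the same route as the paper: lower-bound the overlap $|\langle\phi(\boldsymbol{x})|+\rangle_n|$ and invoke the pure-state identity $T=\sqrt{1-|\langle\cdot|\cdot\rangle|^{2}}$. Your direct numerator/denominator estimate $|\langle\phi(\boldsymbol{x})|+\rangle_n|\ge m/M>1-\epsilon$ is cleaner than the paper's per-coordinate argument (which reaches $|\langle\phi(\boldsymbol{x})|+\rangle_n|\ge\sqrt{1-2\epsilon}$ through an auxiliary function $f_{\min}$), and your closing P\'olya--Szeg\H{o} observation yielding $T\le(M-m)/(M+m)<\epsilon$ is a genuine sharpening not present in the paper.
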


Note that in Proposition~\ref{pro:min-max-con}, we only assume that all the feature elements have the same sign, and the variance of these elements is small, i.e., $|\frac{m}{M} -1| < \epsilon$. No assumptions have been made on  the mean value of each class. As illustrated in Fig.~\ref{fig:example_1}(a), consider two classes which are clearly distinguishable. It is straightforward to verify that the features of each class satisfy the conditions in Proposition~\ref{pro:min-max-con}. Then after amplitude encoding, their expected states concentrate to $\frac{1}{2^{n}} \boldsymbol{1} \boldsymbol{1}^{\top} = \ket{+}_{n} \bra{+}_{n} $, as illustrated in Fig.~\ref{fig:example_1}(b). Therefore, Proposition~\ref{pro:min-max-con} reveals a significant limitation of amplitude encoding: it may erase the mean value information in the classical features that is crucial for classification.

\begin{figure}[htpb]
	\centering
	\includegraphics[width=0.46\textwidth]{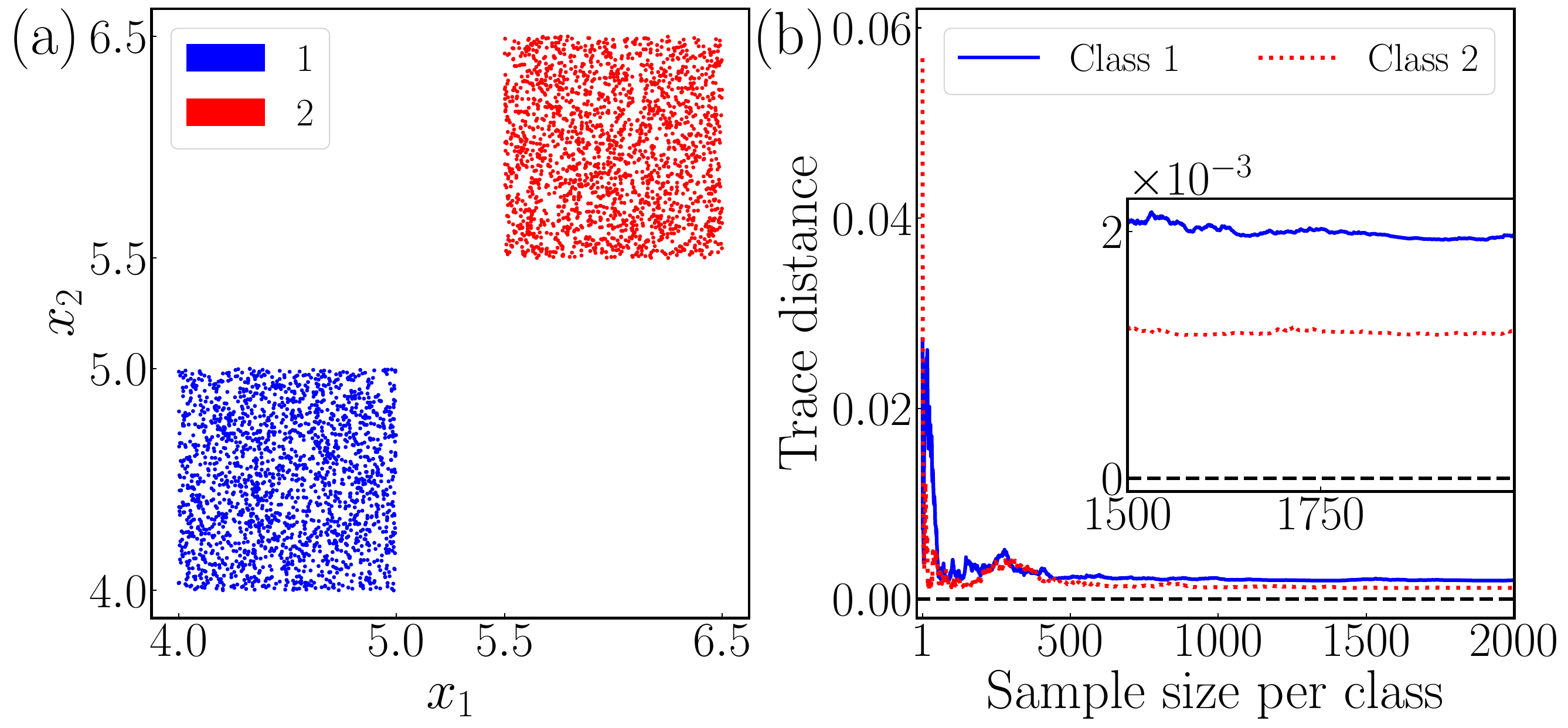}
	\caption{(a) A binary classification dataset. For class 1, features $x_1$ and $x_2$ obey the uniform distribution $\mathcal{U}[4,5]$, and for class 2, features $x_1$ and $x_2$ follow the uniform distribution $\mathcal{U}[5.5,6.5]$. Each class contains 2000 samples. (b) The trace distance between the averaged encoded states and the superposition state $\frac{1}{2} \mathbf{1} \mathbf{1}^{\top}$, where the solid blue line represents class 1 and the red dotted represents class 2. }
	\label{fig:example_1}
\end{figure}

\begin{proposition}
	\label{pro:max_mixed}
Denote by $\mathcal{D}_{\mathcal{X}}$ the distribution of the feature $\boldsymbol{x} \in \mathbb{R}^{2^{n}}$. Assume that the elements of $\boldsymbol{x}$ are i.i.d with an expected value of $0$. In addition, the distribution is symmetric, i.e., $p(x_j) = p(-x_j)$ for all $j \in [0:2^{n}-1]$. Then after amplitude encoding, the expectation of  encoded state is the maximally mixed state, i.e.,
	\begin{equation*}
	\begin{aligned}
	\underset{\boldsymbol{x} \sim  \mathcal{D}_{\mathcal{X}}}{\mathbb{E}}\left[\rho(\boldsymbol{x})\right] = \frac{\boldsymbol{I}}{2^{n}}.
	\end{aligned}
	\end{equation*}
\end{proposition}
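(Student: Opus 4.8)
The plan is to compute the entries of $\underset{\boldsymbol{x}\sim\mathcal{D}_{\mathcal{X}}}{\mathbb{E}}[\rho(\boldsymbol{x})]$ directly from the explicit form~\eqref{eq:amp} and show that the off-diagonal entries vanish while the diagonal entries are all equal. From Eq.~\eqref{eq:amp}, the $(i,j)$ entry of the encoded state is $x_i x_j / C_{\boldsymbol{x}}^2$, where $C_{\boldsymbol{x}}^2 = \sum_{\ell} x_\ell^2$. So I need to understand $\mathbb{E}\!\left[ x_i x_j / C_{\boldsymbol{x}}^2 \right]$ for each pair $(i,j)$. The key structural fact I would exploit is a symmetry/invariance argument rather than an attempt to evaluate these expectations in closed form: because the $x_\ell$ are i.i.d.\ with a distribution symmetric under sign flip, the joint law of $\boldsymbol{x}$ is invariant under the group of coordinate-wise sign changes $x_\ell \mapsto \sigma_\ell x_\ell$ with $\sigma_\ell \in \{+1,-1\}$, and also under permutations of the coordinates.

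For the off-diagonal terms ($i \neq j$), I would fix such an $i,j$ and apply the sign flip $\sigma_i = -1$, all other $\sigma_\ell = +1$. Under this map the denominator $C_{\boldsymbol{x}}^2$ is unchanged (it depends only on the $x_\ell^2$), the numerator $x_i x_j$ changes sign, and the distribution of $\boldsymbol{x}$ is unchanged. Hence $\mathbb{E}[x_i x_j / C_{\boldsymbol{x}}^2] = -\mathbb{E}[x_i x_j / C_{\boldsymbol{x}}^2]$, forcing it to be $0$. (One should note the random variable is bounded in absolute value by $1$ since $|x_i x_j| \le \tfrac12(x_i^2 + x_j^2) \le C_{\boldsymbol{x}}^2$, so the expectation is well-defined; the event $\boldsymbol{x} = 0$ has probability zero, or can be excluded by convention.) For the diagonal terms, I would use permutation invariance: for any $i$, there is a coordinate permutation sending index $0$ to index $i$, under which $C_{\boldsymbol{x}}^2$ is invariant and $x_0^2 \mapsto x_i^2$, so $\mathbb{E}[x_i^2 / C_{\boldsymbol{x}}^2]$ is the same for all $i$. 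Since these $2^n$ equal diagonal entries must sum to $\operatorname{Tr}\rho(\boldsymbol{x}) = 1$ in expectation (trace is linear, $\operatorname{Tr}\rho(\boldsymbol{x}) = 1$ pointwise), each equals $2^{-n}$. Combining, $\mathbb{E}[\rho(\boldsymbol{x})] = \boldsymbol{I}/2^n$.

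I do not anticipate a serious obstacle here; the argument is essentially a symmetrization. The one point requiring a little care is the well-definedness of the expectations and the handling of the measure-zero event $C_{\boldsymbol{x}} = 0$ — I would dispatch this by the boundedness observation above and by noting $\mathbb{P}(\boldsymbol{x}=0)=0$ under any continuous symmetric law (and if the distribution has an atom at $0$ one conditions on $\boldsymbol{x}\neq 0$, which is the only regime in which $\rho(\boldsymbol{x})$ is defined at all). A secondary subtlety is that the hypothesis "$\mathbb{E}[x_j] = 0$'' is actually implied by the stated symmetry $p(x_j) = p(-x_j)$, so it need not be invoked separately; the whole proof runs on the sign-flip and permutation symmetries of the joint distribution, which is why no moment conditions beyond symmetry are needed.
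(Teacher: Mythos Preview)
Your proposal is correct and follows essentially the same approach as the paper: the paper also kills the off-diagonal entries by the single-coordinate sign-flip symmetry, and obtains the diagonal entries by noting (via the i.i.d.\ assumption) that all $\mathbb{E}[x_i^2/C_{\boldsymbol{x}}^2]$ coincide and then solving $a = 1 - (2^n-1)a$, which is exactly your ``equal diagonal entries summing to $\operatorname{Tr}\rho=1$'' argument. Your added remarks on boundedness/well-definedness and on the redundancy of the hypothesis $\mathbb{E}[x_j]=0$ are valid observations that the paper does not make explicit.
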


Note that when the elements of the feature $\boldsymbol{x}$ are i.i.d., and symmetric about the mean value, to make the distribution satisfies the assumptions in Proposition~\ref{pro:max_mixed}, we can employ the standard  $z$-score normalization~\cite{fei2021z}, namely, letting $z = \frac{x - \mu}{\sigma}$, where $\mu$ is the mean of $x$ and $\sigma$ is the standard variance. 

We illustrate Proposition~\ref{pro:max_mixed} by using a binary classification dataset in Fig.~\ref{fig:example_2}. It is clear that the features of class 1 and class 2 can be distinguished by their distances to the origin (or the $\ell_2$ norm). However, after amplitude encoding, their averaged encoded states concentrate to the maximally mixed state. 
\begin{figure}[htpb]
	\centering
	\includegraphics[width=0.46\textwidth]{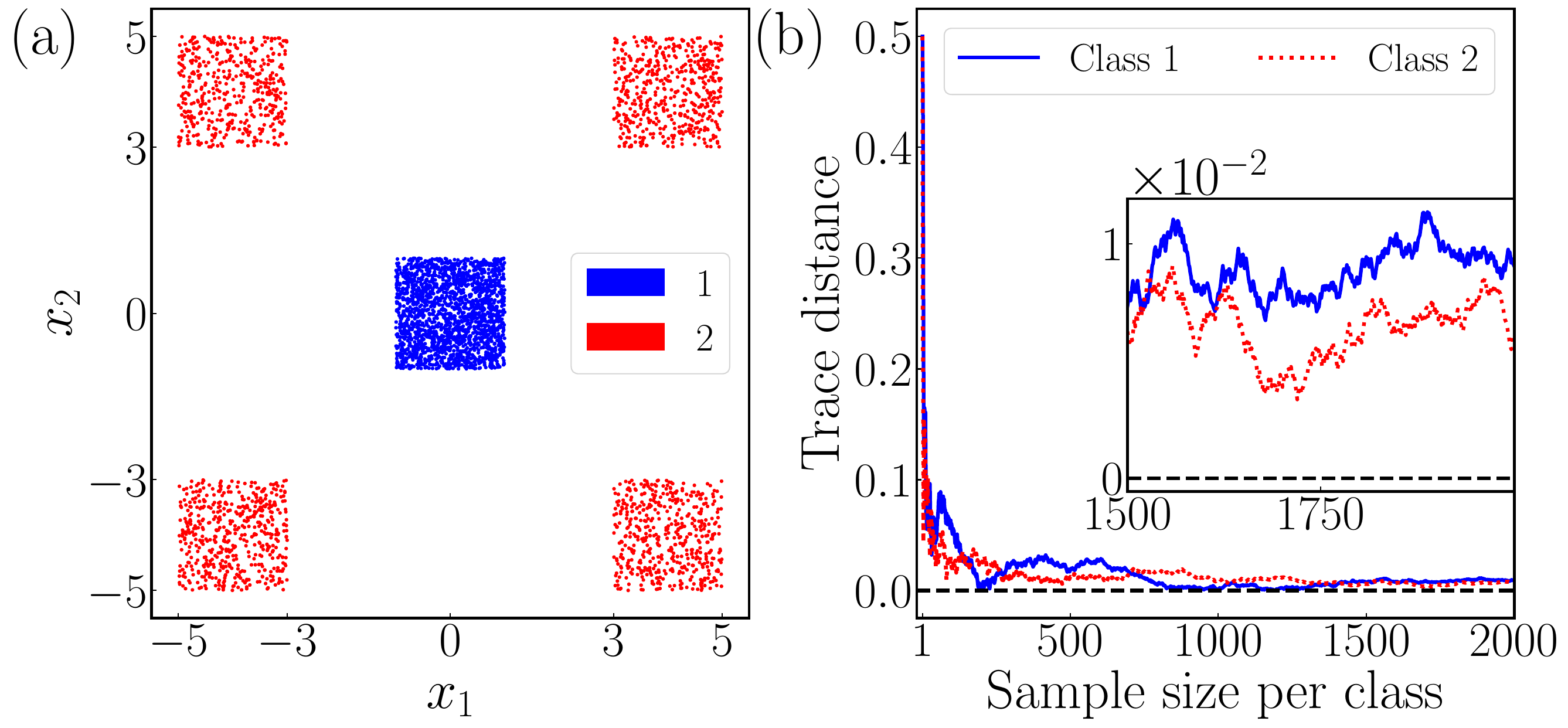}
	\caption{(a) A binary classification dataset. For class 1, features $x_1$ and $x_2$ obey the  uniform distribution $\mathcal{U}[-1,1]$. For class 2, features $x_1$ and $x_2$ follow the uniform distribution over the intervals $[-5,-3] \cup [3,5]$. Each class contains 2000 samples. (b) The trace distance between the averaged encoded state and the maximally mixed state $\frac{\boldsymbol{I}}{2}$.}
	\label{fig:example_2}
\end{figure}

\begin{proposition}
	\label{pro:binary_situation}
	For a binary classification dataset, denote by $\mathcal{D}_{\mathcal{X}_i}$ the distribution of feature $\boldsymbol{x}\in \mathbb{R}^{2^{n}}$ in class $i$, $i=1,2$.  If for any $j \in [0:2^{n}-1]$, the probability density functions $p_{i,j}(x)$ of the $j$-th element in class $i$ satisfy  $p_{1,j}(x) = -p_{2,j}(x)$, then after amplitude encoding, the expected states of the two classes are identical, i.e.,
	\begin{equation*}
		\begin{aligned}
		\underset{\boldsymbol{x} \sim \mathcal{D}_{\mathcal{X}_1}}{\mathbb{E}}\left[\rho(\boldsymbol{x})\right] = \underset{\boldsymbol{x} \sim \mathcal{D}_{\mathcal{X}_2}}{\mathbb{E}}\left[\rho(\boldsymbol{x})\right].
		\end{aligned}
	\end{equation*}
\end{proposition}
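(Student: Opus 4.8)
The plan is to exploit the fact that amplitude encoding is blind to a global sign flip of the feature vector. The first step is the elementary observation that $C_{-\boldsymbol{x}}=C_{\boldsymbol{x}}$, so $\ket{\phi(-\boldsymbol{x})}=-\ket{\phi(\boldsymbol{x})}$, and therefore the density matrix in \eqref{eq:amp} satisfies
\begin{equation*}
\rho(-\boldsymbol{x})=\bigl(-\ket{\phi(\boldsymbol{x})}\bigr)\bigl(-\bra{\phi(\boldsymbol{x})}\bigr)=\rho(\boldsymbol{x}).
\end{equation*}
Equivalently, the matrix entries $\rho(\boldsymbol{x})_{ij}=x_ix_j/C_{\boldsymbol{x}}^2$ are manifestly even functions of $\boldsymbol{x}$, so $\rho(\cdot)$ factors through the identification $\boldsymbol{x}\sim-\boldsymbol{x}$.

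The second step is to read the hypothesis on the coordinate densities as a relation between the two full feature laws. The condition $p_{1,j}(x)=p_{2,j}(-x)$ for every $j$ says that each marginal of class $2$ is the mirror image of the corresponding marginal of class $1$; combined with the coordinates being drawn independently within a class, this is exactly the statement that the reflection $R\colon\boldsymbol{x}\mapsto-\boldsymbol{x}$ transports $\mathcal{D}_{\mathcal{X}_1}$ onto $\mathcal{D}_{\mathcal{X}_2}$, i.e. $\boldsymbol{x}\sim\mathcal{D}_{\mathcal{X}_1}$ implies $-\boldsymbol{x}\sim\mathcal{D}_{\mathcal{X}_2}$.

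The proposition then follows by a one-line change of variables (writing $\mathbb{E}_i$ for the expectation over $\mathcal{D}_{\mathcal{X}_i}$):
\begin{equation*}
\underset{\boldsymbol{x}\sim\mathcal{D}_{\mathcal{X}_2}}{\mathbb{E}}\bigl[\rho(\boldsymbol{x})\bigr]
=\underset{\boldsymbol{x}\sim\mathcal{D}_{\mathcal{X}_1}}{\mathbb{E}}\bigl[\rho(-\boldsymbol{x})\bigr]
=\underset{\boldsymbol{x}\sim\mathcal{D}_{\mathcal{X}_1}}{\mathbb{E}}\bigl[\rho(\boldsymbol{x})\bigr],
\end{equation*}
where the first equality uses the transport relation of Step 2 and the second uses the sign invariance from Step 1. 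One may equivalently argue entrywise: substituting $x_k\mapsto-x_k$ simultaneously for all $k$ leaves both $x_ix_j$ and $C_{\boldsymbol{x}}^2=\sum_k x_k^2$ invariant while turning an expectation over $\mathcal{D}_{\mathcal{X}_2}$ into one over $\mathcal{D}_{\mathcal{X}_1}$, so $\mathbb{E}_2[\rho(\boldsymbol{x})_{ij}]=\mathbb{E}_1[\rho(\boldsymbol{x})_{ij}]$ for every pair $(i,j)$, hence the two averaged states coincide.

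The step that needs the most care is the passage from the per-coordinate hypothesis to a statement about the joint distribution: because $\rho(\boldsymbol{x})$ depends on all coordinates jointly through the normalization $C_{\boldsymbol{x}}^2$, the marginal conditions alone are not literally sufficient and must be supplemented by independence of the features within each class (or, alternatively, the hypothesis can be stated directly as the reflection relation between the two joint laws). Once that is pinned down, everything else is the two short computations above; unlike Proposition~\ref{pro:min-max-con}, no spectral or analytic estimates are required.
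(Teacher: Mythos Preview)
Your argument is correct and is essentially the paper's own proof: both exploit that $\rho(\boldsymbol{x})_{ij}=x_ix_j/C_{\boldsymbol{x}}^2$ is invariant under $\boldsymbol{x}\mapsto-\boldsymbol{x}$, and then use the reflection relation between the two class densities (the paper does this entrywise via the product-form integral, pulling out the factor $(-1)^{2^n}=1$). You are also right to flag that independence of the coordinates within each class is needed to pass from the per-coordinate hypothesis to the joint law; the paper's proof uses the product density $\prod_k p_{i,k}(x_k)$ without stating this assumption in the proposition.
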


Proposition~\ref{pro:binary_situation} indicates that the concentration phenomenon caused by amplitude encoding is not limited to convergence towards a specific state. We exemplify Proposition~\ref{pro:binary_situation} in Fig.~\ref{fig:example_3} by showing an example where the features of class 1 and class 2 are linearly separable, but the trace distance of their averaged encoded states converges to zero as the the sample size increases.
\begin{figure}[htpb]
	\centering
	\includegraphics[width=0.46\textwidth]{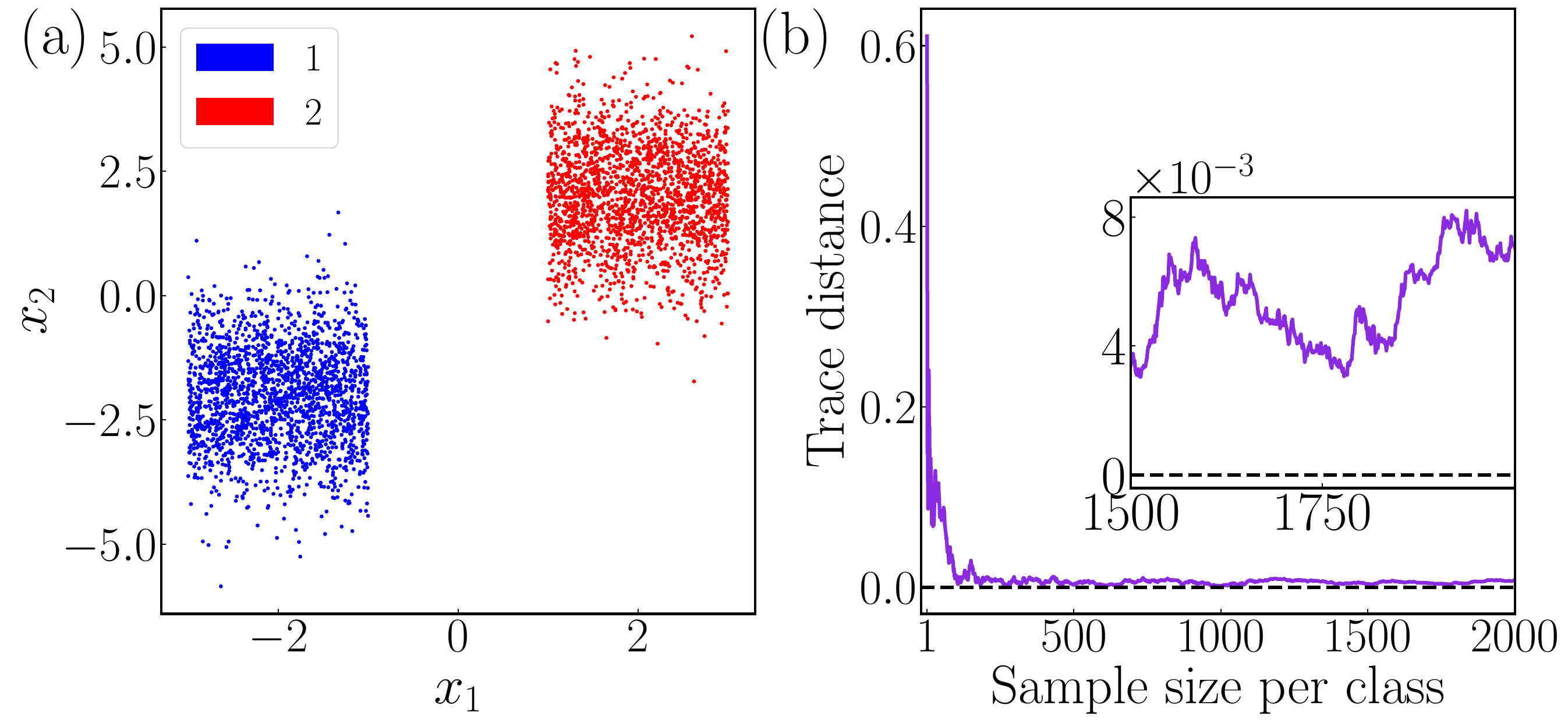}
	\caption{(a) A binary classification dataset. For class 1, $x_1$ follows the uniform distribution $\mathcal{U}[-3,-1]$, and $x_2$ follows the normal distribution $\mathcal{N}(-2,1)$. For class 2, $x_1$ obeys the uniform distribution $\mathcal{U}[1,3]$, and $x_2$ obeys the normal distribution $\mathcal{N}(2,1)$. Each class contains 2000 samples. (b) The trace distance between the averaged encoded states of class 1 and class 2.}
	\label{fig:example_3}
\end{figure}

The propositions in this subsection demonstrate that amplitude encoding may erase crucial information needed for classification, leading to the concentration phenomenon. This can result in poor performance of quantum classifiers. The detailed proofs of the propositions are provided in Appendix~\ref{app:concentration-proof}.

\subsection{Numerical validations}
\label{sub:trainability}

In this subsection, we verify the loss barrier phenomenon caused by amplitude encoding via numerical simulations. 

To demonstrate the limitations of amplitude encoding, we examine the binary datasets depicted in Fig.~\ref{fig:example_1}(a) to Fig.~\ref{fig:example_3}(a). Under amplitude encoding, if even these simple binary datasets cannot be effectively categorized, it is unlikely that more complex datasets could be successfully classified.  For these binary classifications tasks, we can just use a single qubit and employ a simple PQC, as illustrated in Fig.~\ref{fig:training_loss}(a). The classical data are first encoded into quantum states via amplitude encoding. Then the quantum states pass through a total of $L$ layers, each of which comprises three quantum gates: $Rz$, $Rx$, and $Rz$, with independent parameters. The initial parameters are sampled from the standard normal distribution $\mathcal{N}(0,1)$. We train the variational  parameters of the PQC by minimizing the loss function Eq.~\eqref{eq:celoss}. For each dataset, we choose the Pauli Z as the observable for class 1 and the Pauli X as the observable for class 2. We utilize the Adam optimizer with a learning rate of 0.01 for the training.  

We conduct numerical simulations for three scenarios with $L = 1$, $L=10$, and $L=30$, respectively. Each scenario is repeated 10 rounds, with random initializations of the circuit parameters and 1000 iterations per round. Fig~\ref{fig:training_loss}.(b) displays the training loss for the datasets from Fig.~\ref{fig:example_1}(a), Fig.~\ref{fig:example_2}(a), and Fig.~\ref{fig:example_3}(a) with $L = 10$. It is clear that at the initial stage of training, there is a significant descent in the loss function value, indicating that the gradient of the loss function does not vanish at the beginning of training. However, the loss function quickly converges to around $\ln2$, demonstrating the loss barrier phenomenon.  Similar results are observed for $L = 1$ and $L = 30$, and the details are presented in Appendix~\ref{app:trainability-validation}.

\begin{figure}[htpb]
	\centering
	\includegraphics[width=0.42\textwidth]{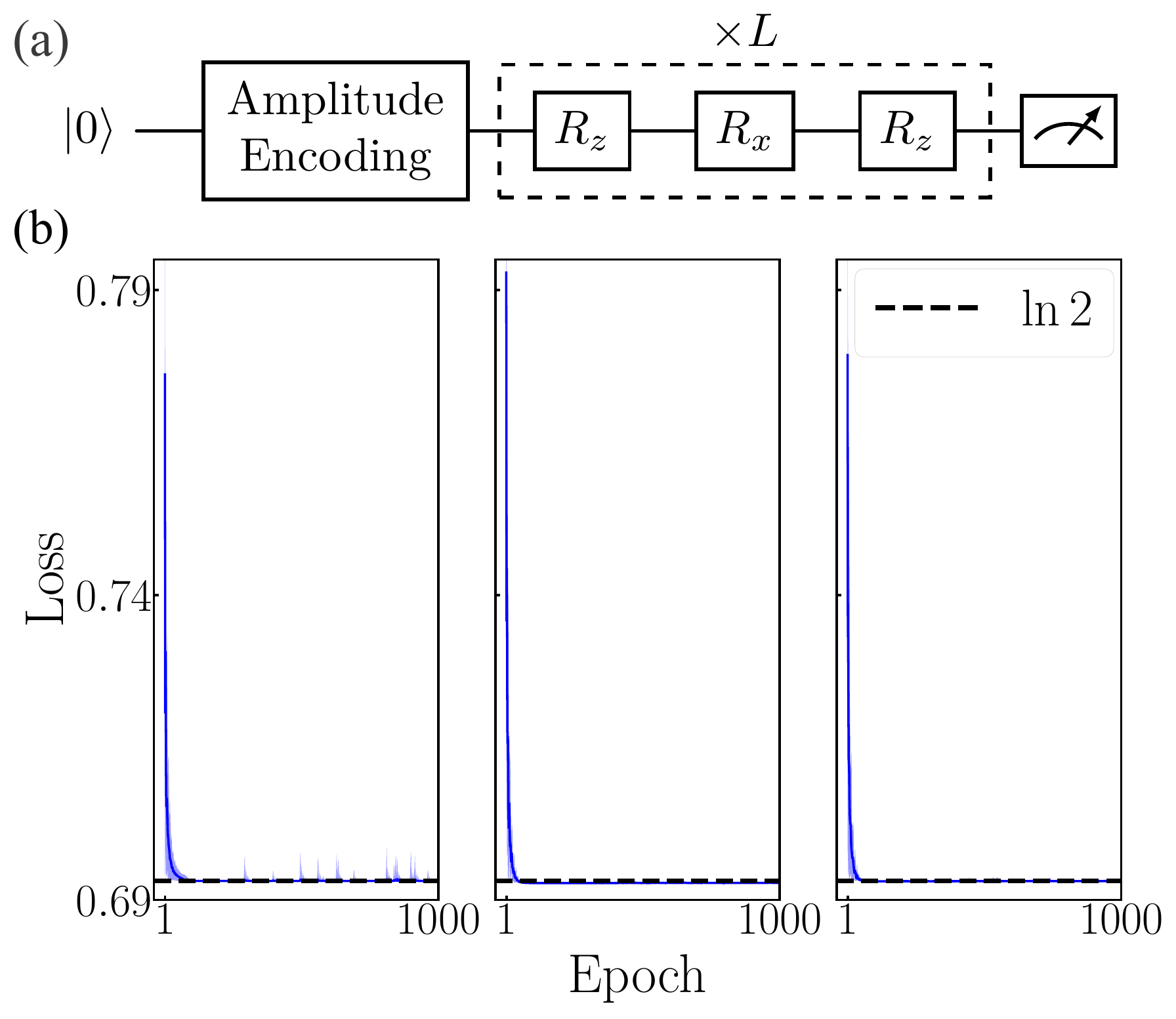}
	\caption{(a) The PQC with $L$ layers used for quantum binary classification. (b) Training loss for datasets from Fig.~\ref{fig:example_1}(a), Fig.~\ref{fig:example_2}(a), and Fig.~\ref{fig:example_3}(a) (from left to right). Blue solid lines depict the mean loss over 10 runs, and the shaded areas show their range. }
	\label{fig:training_loss}
\end{figure}

To further demonstrate the  negative impact of loss barrier, we present in  Table~\ref{tab:best_training_acc} the best training accuracy obtained within 1000 iterations for the three datasets. We find that the best training accuracy is around 0.5 with a relatively small standard deviation. This implies that, owing to the presence of loss barrier phenomenon, it is impossible to train a quantum classifier with satisfactory  performance. The essential reason is the concentration phenomenon induced by amplitude encoding.

\begin{table}[ht]
	\centering
	\caption{Best training accuracy for datasets from Fig.~\ref{fig:example_1}(a) to Fig.~\ref{fig:example_3}(a). The values in the table represent the mean $\pm$ standard deviation over 10 runs.}
	\begin{tabular}{c|c|c|c}
	  \toprule
	    &  {Fig.~\ref{fig:example_1}(a)}& {Fig.~\ref{fig:example_2}(a)} & Fig.~\ref{fig:example_3}(a) \\
		\hline
	 {Accuracy} & 0.534 $\pm$ 0.001 & 0.517  $\pm$ 0.000 & 0.511 $\pm$ 0.001 \\
	  \hline
	\end{tabular}
	\label{tab:best_training_acc}
  \end{table}



\subsection{Real-world datasets scenario}
\label{sub:real-world}

In this subsection, we consider real-world datasets and demonstrate that 
the concentration phenomenon induced by amplitude encoding is widespread.

We select common computer vision datasets for classification, including  forests and sea lakes from the EuroSAT dataset ($64 \times 64$ pixels)~\cite{helber2019eurosat}, adipose tissues and backgrounds from the PathMNIST dataset ($28 \times 28$ pixels)~\cite{yang2023medmnist}, and airplanes and birds from the CIFAR-10 dataset  ($32 \times 32$ pixels)~\cite{krizhevsky2009learning}. We label these pairs as class 1 and class 2, respectively, and compute the trace distance between the averaged encoded states of the two classes.  For comparison, we also consider the MNIST dataset and take digits 0 and 1 ($28 \times 28$ pixels) to be class 1 and class 2~\cite{lecun1998mnist}, which are frequently used in quantum classification tasks~\cite{larose2020robust,grant2018hierarchical,hur2022quantum,mitsuda2024approximate,wang2024supervised,li2024ensemble}. For all datasets, we resize the images to $32 \times 32$ pixels and employ 10 qubits for amplitude encoding.
\begin{figure}[htpb]
	\centering
	\includegraphics[width=0.4\textwidth]{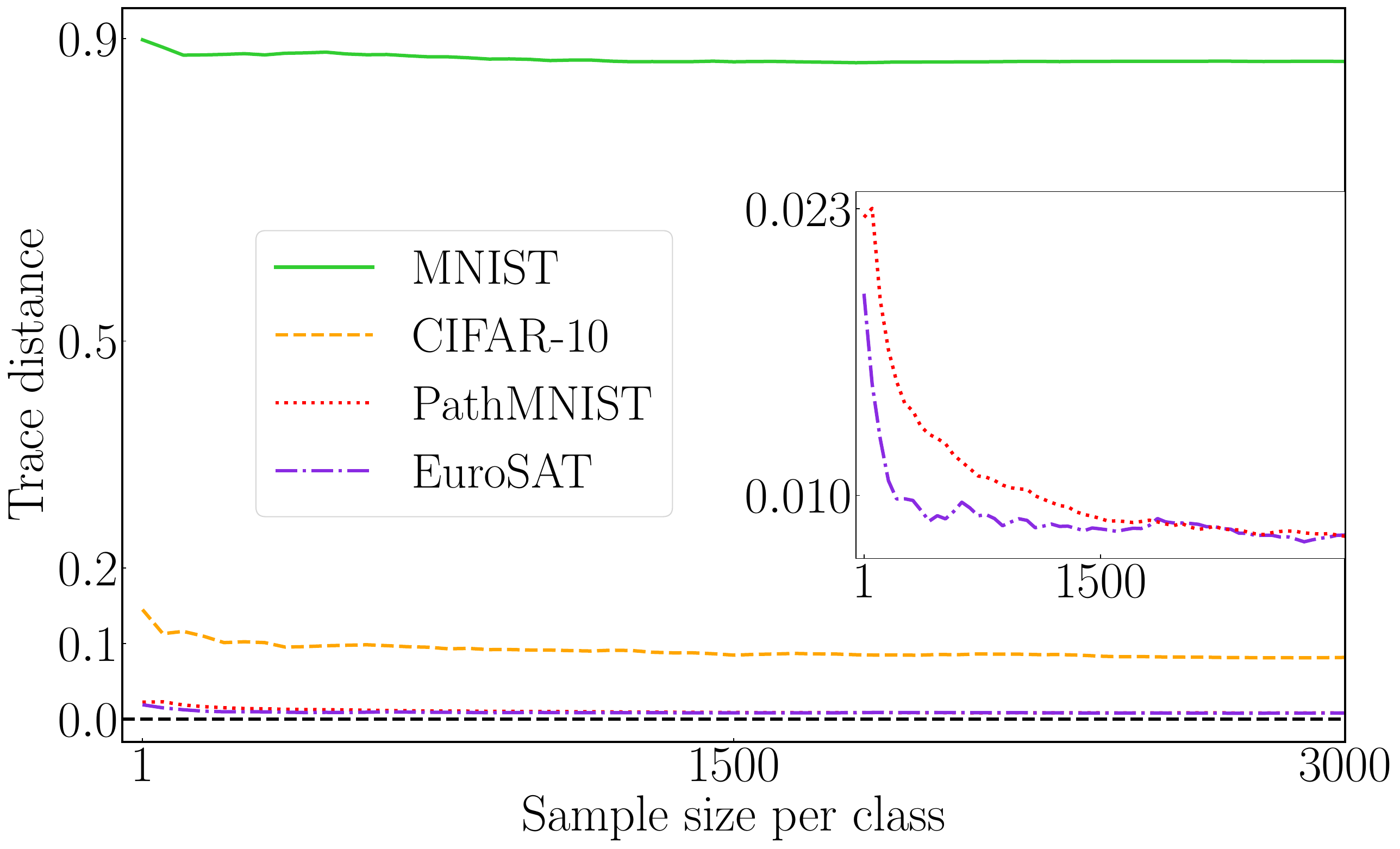}
	\caption{The trace distance between the averaged encoded states of class 1 and class 2 for different datasets.}
	\label{fig:real_datasets}
\end{figure}

\begin{figure*}[htpb]
	\centering
	\includegraphics[width=0.8\textwidth]{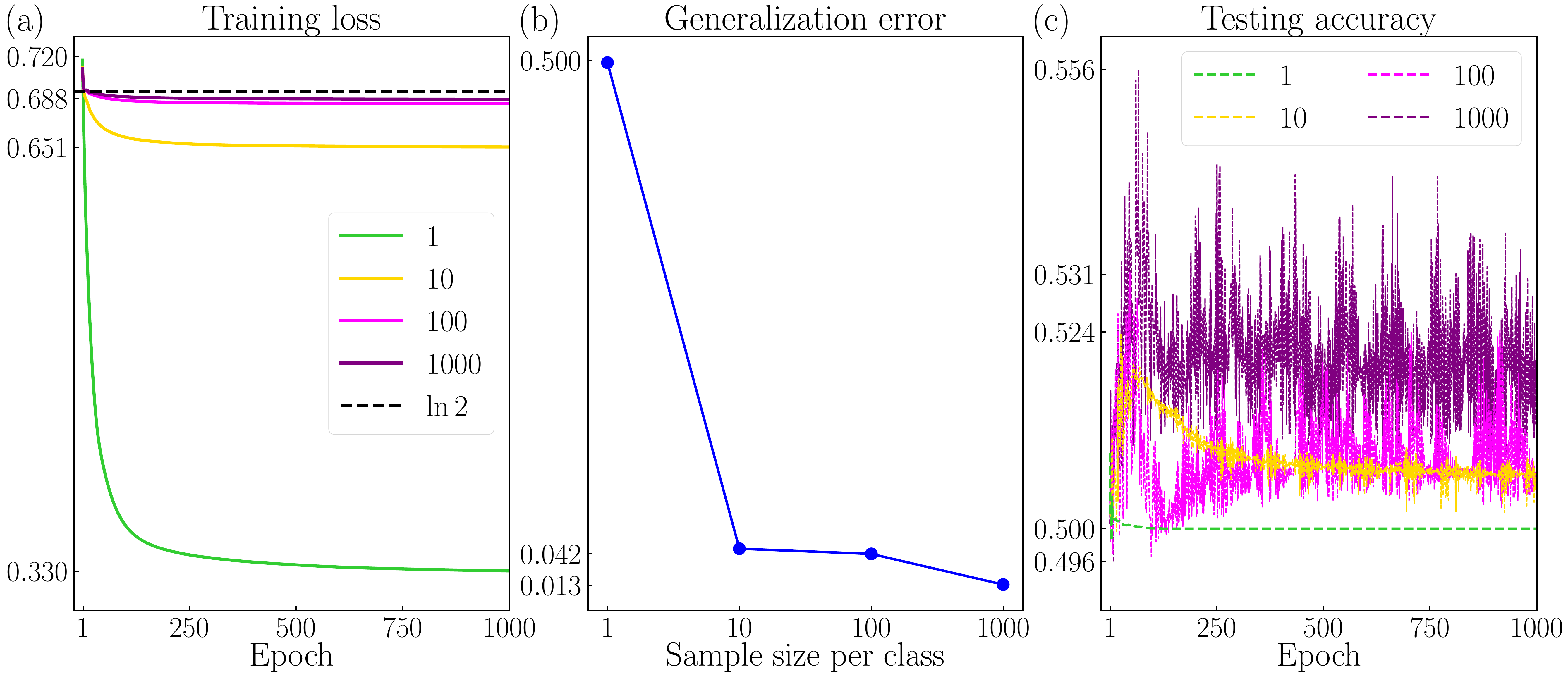}
	\caption{Quantum classification performance for EuroSAT dataset. (a) The average training loss of the quantum classifier over 10 runs for different sample sizes per class. (b) The average generalization error over 10 runs with 1000 iterations of each run (10000 results in total) versus different sample sizes per class. (c) The average testing accuracy obtained over 10 runs for  different sample sizes per class. }
	\label{fig:eurosat_loss}
\end{figure*}

\begin{figure*}[htpb]
	\centering
	\includegraphics[width=0.8\textwidth]{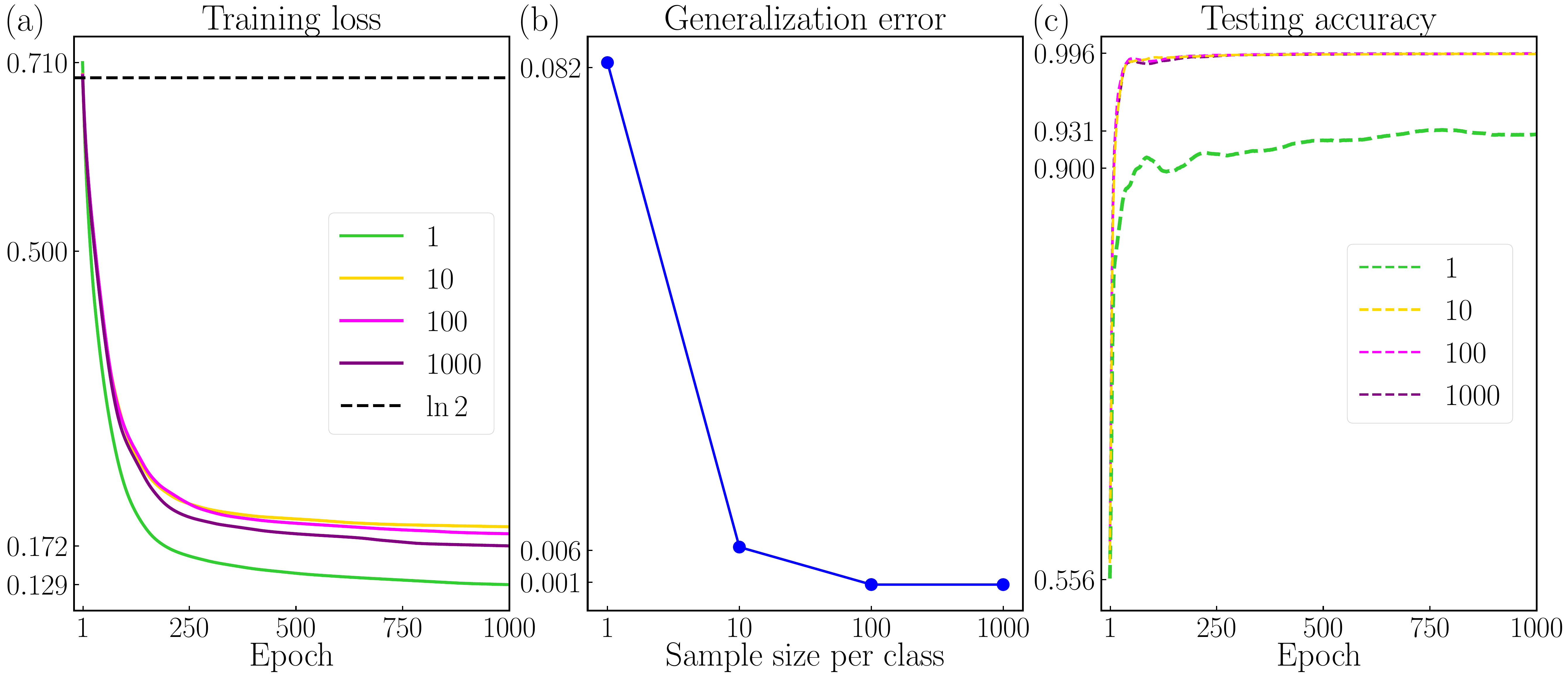}
	\caption{Quantum classification performance for MNIST dataset. (a) The average training loss over 10 runs for different sample sizes per class. (b) The average generalization error over 10 runs versus different sample sizes per class. (c) The average testing accuracy over 10 runs for different sample sizes per class. }
	\label{fig:mnist_loss}
\end{figure*}
As illustrated in Fig.~\ref{fig:real_datasets}, the trace distances between the averaged encoded states of the respective two classes in the EuroSAT, PathMNIST, and CIFAR-10 datasets consistently remain below 0.2 as the sample size increases. Notably, for the EuroSAT and PathMNIST datasets, the trace distances are even less than 0.023. In contrast, for the relatively simple and sparse MNIST dataset, the trace distance between the averaged encoded states of the two classes is close to 0.9. The comparisons in Fig.~\ref{fig:real_datasets} imply that good performance of quantum classifiers under amplitude encoding on the MNIST dataset may not be generalized to more complex datasets. 

To see this, we employ the quantum convolutional neural network (QCNN) as the PQC to train a quantum classifier to categorize the EuroSAT dataset. The QCNN has been shown to be free from  barren plateaus~\cite{pesah2021absence}, and has been widely used for quantum classification tasks. We illustrate the QCNN circuit in  Appendix~\ref{app:QCNN}. Similar to Sec.~\ref{sub:trainability}, we employ the Adam as the optimizer with a learning rate of 0.01. For class 1, we use the Pauli Z operator on the first qubit as the observable, while for class 2, we use the Pauli X operator on the first qubit as the observable. We ensure that the two classes have the same number of samples. We conduct simulations for 10 rounds, and in each round the parameters of the QCNN are independently initialized following the  standard normal distribution $\mathcal{N}(0,1)$. After training, we employ a test set consisting of 4000 new, unseen samples (2000 per class), and calculate the accuracy on this test set as the prediction accuracy.

We depict the performance of the trained quantum classifier in Fig.~\ref{fig:eurosat_loss}. From Fig.~\ref{fig:eurosat_loss}(a), as the sample size increases, the average convergence value of its training loss increases and approaches to $\ln 2$. Fig.~\ref{fig:eurosat_loss}(b) illustrates the average of the generalization error (Eq.~\eqref{eq:gen}) over 10 runs with 1000 iterations of each run (10000 results in total). We find that the generalization error decreases as the sample size increases. From Fig.~\ref{fig:eurosat_loss}(c), we find that for different sample sizes, the average prediction accuracy remains around 0.5 and never exceeds 0.56, indicating poor classification performance on the EuroSAT dataset under amplitude encoding.

Then for comparison, we also perform numerical simulations on the MNIST dataset under the same settings as those for the EuroSAT dataset, including the circuit, optimizer, learning rate, parameter initializations, and observables. We demonstrate the results  in Fig.~\ref{fig:mnist_loss}. Notably, the training losses on the MNIST dataset are significantly lower than ln2  across  different sample sizes, which is in stark contrast to  the EuroSAT dataset case. Most remarkably, a prediction accuracy above 0.9 can be achieved with just one sample per class, and when the number of samples per class exceeds 10, the testing accuracy remains consistently above 0.99. The success of classification under amplitude encoding on the MNIST dataset can be attributed to the sparsity of its features. The comparison between the MNIST and EuroSAT datasets indicates that findings validated on the MNIST dataset may not be readily generalized to more complex, real-world datasets.

\section{Conclusion}\label{IV}

We have investigated the concentration phenomenon induced by amplitude encoding and the resulting loss barrier phenomenon. Since encoding is a necessary and crucial step in leveraging QML to address classical problems, our findings indicate that the direct use of amplitude encoding may undermine the potential advantages of  QML. Therefore, more effort should be devoted to developing more efficient encoding strategies to fully unlock the potential of QML.

\section*{Acknowledgement}
This work was supported by the National Key Research and Development Program of China
(No. 2024YFA1013104).
\bibliography{ref}

\onecolumngrid

\pagebreak

\appendix

\setcounter{lemma}{0}
\renewcommand{\thelemma}{A.\arabic{lemma}}
\setcounter{proposition}{0}
\renewcommand{\theproposition}{A.\arabic{proposition}}
\setcounter{theorem}{0}
\renewcommand{\thetheorem}{A.\arabic{theorem}}
\setcounter{corollary}{0}
\renewcommand{\thecorollary}{A.\arabic{corollary}}

\section{Proof of Theorem~\ref{thm:loss_barrier}}
\label{app:proof-loss-barrier}

\begin{theorem}[Theorem~\ref{thm:loss_barrier} in the main text]
	\label{athm:loss_barrier}
	For a $K$-class classification, we employ the cross-entropy loss function $\mathcal{L}_S(\boldsymbol{\theta})$  defined in Eq.~\eqref{eq:celoss}. The quantum classifier is trained on a balanced training set $S = \{(\boldsymbol{x}^{(m)},y^{(m)})\}_{m=1}^{M}$, where  each class contains $M/K$ samples. Suppose the eigenvalues of each observable $H_k$ belong to $[-1,1]$, for $k=1,\ldots, K$. If the trace distance between the expectations of encoded states of any different classes is less than  $\epsilon$, then for any PQC $U(\boldsymbol{\theta})$ and optimization algorithm, we have
	\begin{equation}
		\label{eq:trace_between_class}
	\begin{aligned}
	\mathcal{L}_S(\boldsymbol{\theta}) \geqslant \ln\left[K - 4(K-1)\epsilon \right]
	\end{aligned}
	\end{equation}
	with probability at least $1 - 8e^{-M \epsilon^2 /8K}$.
\end{theorem}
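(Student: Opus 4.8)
The plan is to peel off a single convexity inequality that turns the cross-entropy loss into a quantity \emph{linear} in the encoded states, after which the balanced-data structure together with Lemma~\ref{lem:hoffding} finishes the job. Write $h_j^{(m)}:=h_j(\boldsymbol{x}^{(m)},\boldsymbol{\theta})=\operatorname{Tr}[H_jU(\boldsymbol{\theta})\rho(\boldsymbol{x}^{(m)})U^{\dagger}(\boldsymbol{\theta})]$. Since $\boldsymbol{y}^{(m)}$ is one-hot, the summand in Eq.~\eqref{eq:celoss} equals $\ell(\boldsymbol{\theta};\boldsymbol{x}^{(m)},y^{(m)})=\ln\sum_{j=1}^{K}e^{h_j^{(m)}-h_{y^{(m)}}^{(m)}}$. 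A direct Jensen bound on $\frac{1}{M}\sum_m\ln(\cdot)$ runs the wrong way, so instead I would use convexity of $\exp$ \emph{inside} each term, $\sum_j e^{a_j}\ge K\exp\!\big(\frac{1}{K}\sum_j a_j\big)$, which gives $\ell(\boldsymbol{\theta};\boldsymbol{x}^{(m)},y^{(m)})\ge\ln K+\frac{1}{K}\sum_{j=1}^{K}h_j^{(m)}-h_{y^{(m)}}^{(m)}$; the point is that the right-hand side is now linear in $\rho(\boldsymbol{x}^{(m)})$.

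Next I would average over the balanced training set. Introduce the empirical average state $\bar{\rho}_M=\frac{1}{M}\sum_m\rho(\boldsymbol{x}^{(m)})$ and the per-class averages $\hat{\rho}_k=\frac{K}{M}\sum_{m:\,y^{(m)}=k}\rho(\boldsymbol{x}^{(m)})$, which is legitimate because each class carries $M/K$ samples. Then $\frac{1}{M}\sum_m\frac{1}{K}\sum_j h_j^{(m)}=\frac{1}{K}\sum_j\operatorname{Tr}[H_jU(\boldsymbol{\theta})\bar{\rho}_MU^{\dagger}(\boldsymbol{\theta})]$ and $\frac{1}{M}\sum_m h_{y^{(m)}}^{(m)}=\frac{1}{K}\sum_k\operatorname{Tr}[H_kU(\boldsymbol{\theta})\hat{\rho}_kU^{\dagger}(\boldsymbol{\theta})]$, so that
\[
\mathcal{L}_S(\boldsymbol{\theta}) \ge \ln K+\frac{1}{K}\sum_{k=1}^{K}\operatorname{Tr}\big[H_kU(\boldsymbol{\theta})(\bar{\rho}_M-\hat{\rho}_k)U^{\dagger}(\boldsymbol{\theta})\big] \ge \ln K-\frac{1}{K}\sum_{k=1}^{K}\big\|\bar{\rho}_M-\hat{\rho}_k\big\|_1,
\]
where the last inequality uses $\|H_k\|_\infty\le1$ and the Hölder-type estimate behind Eq.~\eqref{eq:H_leq_1}. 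Since $\bar{\rho}_M-\hat{\rho}_k=\frac{1}{K}\sum_l(\hat{\rho}_l-\hat{\rho}_k)$, the triangle inequality gives $\|\bar{\rho}_M-\hat{\rho}_k\|_1\le\frac{2}{K}\sum_{l\neq k}T(\hat{\rho}_k,\hat{\rho}_l)$, and hence $\mathcal{L}_S(\boldsymbol{\theta})\ge\ln K-\frac{2}{K^2}\sum_k\sum_{l\neq k}T(\hat{\rho}_k,\hat{\rho}_l)$.

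Finally I would bring in the concentration. Applying Lemma~\ref{lem:hoffding} to each class (which carries $M/K$ i.i.d. samples) with accuracy $\epsilon/2$, every class average satisfies $T(\hat{\rho}_k,\mathbb{E}[\rho\,|\,k])\le\epsilon/2$ except with probability $4e^{-M\epsilon^2/8K}$, which explains the exponent, and a union bound over the $K$ classes controls all of them simultaneously, yielding the claimed high-probability statement. On that event, the triangle inequality combined with the hypothesis $T(\mathbb{E}[\rho\,|\,k],\mathbb{E}[\rho\,|\,l])\le\epsilon$ gives $T(\hat{\rho}_k,\hat{\rho}_l)\le2\epsilon$, so $\mathcal{L}_S(\boldsymbol{\theta})\ge\ln K-\frac{4(K-1)\epsilon}{K}$; and $\ln K-\frac{4(K-1)\epsilon}{K}=\ln\!\big(Ke^{-4(K-1)\epsilon/K}\big)\ge\ln\!\big(K-4(K-1)\epsilon\big)$ by $e^{-t}\ge1-t$, which is Eq.~\eqref{eq:trace_between_class}.

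I expect the crux to be the first two steps: one must find the single convexity step that at once lower-bounds an average of logarithms and linearizes the loss in the encoded states, so that balancedness lets individual samples be replaced by the class averages $\hat{\rho}_k$ and the global average $\bar{\rho}_M$, and the whole loss be reduced to $\sum_k\|\bar{\rho}_M-\hat{\rho}_k\|_1$. What remains --- the triangle inequality, the operator inequality in Eq.~\eqref{eq:H_leq_1}, and the empirical concentration of Lemma~\ref{lem:hoffding} --- is routine; the one delicate point is the constant bookkeeping in the tail, namely choosing accuracy $\epsilon/2$ in the lemma (so that $2\cdot\frac{\epsilon}{2}+\epsilon=2\epsilon$ propagates to $\frac{4(K-1)\epsilon}{K}$) and tracking the union-bound factor over the classes.
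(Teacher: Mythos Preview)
Your proof is correct and reaches the stated bound, but the route differs from the paper's in how the convexity is deployed. The paper keeps the log-sum-exp form intact and applies Jensen's inequality to the convex function $x\mapsto\ln\sum_j e^{x_j}$ \emph{twice}---once to push the within-class average over samples inside, once to push the average over classes inside---and only then linearizes via $e^{x}\ge x+1$, landing directly on $\ln\!\big[K-4(K-1)\epsilon\big]$. You instead apply AM--GM (convexity of $\exp$) once \emph{per sample}, which linearizes the loss in $\rho(\boldsymbol{x}^{(m)})$ immediately; the averaging is then trivial, and you obtain the slightly sharper intermediate bound $\ln K-\tfrac{4(K-1)\epsilon}{K}$ before relaxing it to the stated form via $e^{-t}\ge 1-t$. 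Your organization is arguably cleaner---one convexity step up front, linear bookkeeping in the middle, one relaxation at the end---while the paper's version works entirely with the scalar averages $\overline{h}_l^{(k)}$ and never needs the global mixture $\bar{\rho}_M$ or the decomposition $\bar{\rho}_M-\hat{\rho}_k=\tfrac{1}{K}\sum_l(\hat{\rho}_l-\hat{\rho}_k)$.

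One bookkeeping remark: a union bound over the $K$ classes gives failure probability $4K\,e^{-M\epsilon^{2}/8K}$, not the claimed $8\,e^{-M\epsilon^{2}/8K}$; the two coincide only for $K=2$. The paper's own proof has the same loose end---it establishes $|\overline{h}_l^{(k)}-\overline{h}_l^{(j)}|\le 4\epsilon$ for a \emph{fixed} pair $(k,j)$ with probability $1-8e^{-M\epsilon^{2}/8K}$ and then uses it for all pairs in the final sum without an additional union---so this is a shared cosmetic issue rather than a flaw in your argument.
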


\begin{proof}
	For the training set $S$, we denote by $\boldsymbol{x}^{(k,m)}$  the $m$-th feature of the $k$-th class, and $\boldsymbol{x}^{(j,n)}$ the $n$-th feature of the $j$-th class. For the observable $H_l$, the corresponding expectations of measurements read	
\begin{equation*}
\begin{aligned}
	h_{l}^{(k,m)} & \coloneqq  \operatorname{Tr}\left[H_l U(\boldsymbol{\theta}) \rho (\boldsymbol{x}^{(k,m)} ) U^{\dagger} \left(\boldsymbol{\theta} \right)\right], \\
	h_{l}^{(j,n)} & \coloneqq  \operatorname{Tr}\left[H_l U\left( \boldsymbol{\theta} \right) \rho( \boldsymbol{x}^{(j,n)})U^{\dagger}(\boldsymbol{\theta})   \right].
\end{aligned}
\end{equation*}
According to the assumption, the training set $S$ contains $M$ samples in total, with $M' = M/K$ samples per class. For the $k$-th class, we define the averaged expectation value over the subset of training samples $\{(\boldsymbol{x}^{(k,m)},k)\}_{m = 1}^{M'}$ as  $\overline{h}_l^{(k)} \coloneqq \frac{1}{M'} \sum_{m = 1}^{M'} h_l^{(k,m)}$. Similarly, we define the averaged encoded state $\overline{\rho}_{M'}^{(k)} \coloneqq \frac{1}{M'}\sum_{m=1}^{M'} \rho(\boldsymbol{x}^{(k,m)})$ for this subset. For the $j$-th class, we  define $\overline{h}_l^{(j)} \coloneqq \frac{1}{M'} \sum_{n = 1}^{M'} h_l^{(j,n)}$ and $\overline{\rho}_{M'}^{(j)} \coloneqq \frac{1}{M'} \sum_{n=1}^{M'} \rho(\boldsymbol{x}^{(j,n)})$, in a similar manner.

By applying H\"{o}lder's inequality, we have
$$
\begin{aligned}
\left|\overline{h}_l^{(k)} - \overline{h}_l^{(j)}\right| &= \left| \operatorname{Tr}\left[H_l U(\boldsymbol{\theta}) \left( \overline{\rho}_{M'}^{(k)}  - \overline{\rho}_{M'}^{(j)} \right) U^{\dagger}(\boldsymbol{\theta})  \right] \right| \\ 
& \leqslant \Big\|U^{\dagger}(\boldsymbol{\theta})H_l U(\boldsymbol{\theta})\Big\|_{\infty}  \left\|\overline{\rho}_{M'}^{(k)} - \overline{\rho}_{M'}^{(j)}\right\|_{1} \\
& \leqslant  \left\| \overline{\rho}_{M'}^{(k)} - \underset{\boldsymbol{x}^{(k)} \sim \mathcal{D}_{\mathcal{X}_k}} {\mathbb{E}}[\rho(\boldsymbol{x}^{(k)})]  + \underset{\boldsymbol{x}^{(k)} \sim \mathcal{D}_{\mathcal{X}_k}} {\mathbb{E}}[\rho(\boldsymbol{x}^{(k)})]  - \overline{\rho}_{M'}^{(j)} \right\|_{1}  \\
& \leqslant  \left\| \overline{\rho}_{M'}^{(k)} - \underset{\boldsymbol{x}^{(k)} \sim \mathcal{D}_{\mathcal{X}_k}} {\mathbb{E}}[\rho(\boldsymbol{x}^{(k)})]  \right\|_{1} + \left\| \underset{\boldsymbol{x}^{(k)} \sim \mathcal{D}_{\mathcal{X}_k}} {\mathbb{E}}[\rho(\boldsymbol{x}^{(k)})]  - \overline{\rho}_{M'}^{(j)} \right\|_{1}  \\
& \leqslant \left\| \overline{\rho}_{M'}^{(k)} - \underset{\boldsymbol{x}^{(k)} \sim \mathcal{D}_{\mathcal{X}_k}} {\mathbb{E}}[\rho(\boldsymbol{x}^{(k)})]  \right\|_{1} + \left\| \underset{\boldsymbol{x}^{(k)} \sim \mathcal{D}_{\mathcal{X}_k}} {\mathbb{E}}[\rho(\boldsymbol{x}^{(k)})]   - \underset{\boldsymbol{x}^{(j)} \sim \mathcal{D}_{\mathcal{X}_j}} {\mathbb{E}}[\rho(\boldsymbol{x}^{(j)})] + \underset{\boldsymbol{x}^{(j)} \sim \mathcal{D}_{\mathcal{X}_j}} {\mathbb{E}}[\rho(\boldsymbol{x}^{(j)})]- \overline{\rho}_{M'}^{(j)} \right\|_{1} \\
& \leqslant \left\| \overline{\rho}_{M'}^{(k)} - \underset{\boldsymbol{x}^{(k)} \sim \mathcal{D}_{\mathcal{X}_k}} {\mathbb{E}}[\rho(\boldsymbol{x}^{(k)})]  \right\|_{1}  + \left\|  \underset{\boldsymbol{x}^{(k)} \sim \mathcal{D}_{\mathcal{X}_k}} {\mathbb{E}}[\rho(\boldsymbol{x}^{(k)})]   - \underset{\boldsymbol{x}^{(j)} \sim \mathcal{D}_{\mathcal{X}_j}} {\mathbb{E}}[\rho(\boldsymbol{x}^{(j)})] \right\|_{1} + \left\| \underset{\boldsymbol{x}^{(j)} \sim \mathcal{D}_{\mathcal{X}_j}} {\mathbb{E}}[\rho(\boldsymbol{x}^{(j)})]- \overline{\rho}_{M'}^{(j)}  \right\|_{1}.
\end{aligned}
$$
According to Lemma~\ref{lem:hoffding}, it yields
$$
\begin{aligned}
	\left\| \overline{\rho}_{M'}^{(k)} - \underset{\boldsymbol{x}^{(k)} \sim \mathcal{D}_{\mathcal{X}_k}} {\mathbb{E}}[\rho(\boldsymbol{x}^{(k)})]  \right\|_{1} + \left\| \underset{\boldsymbol{x}^{(j)} \sim \mathcal{D}_{\mathcal{X}_j}} {\mathbb{E}}[\rho(\boldsymbol{x}^{(j)})]- \overline{\rho}_{M'}^{(j)}  \right\|_{1} \leqslant 2 \epsilon,
\end{aligned}
$$
with probability at least $1-8 e^{-M' \epsilon^2 /8}$. Combining this with our assumption that
$$
\begin{aligned}
	\left\|  \underset{\boldsymbol{x}^{(k)} \sim \mathcal{D}_{\mathcal{X}_k}} {\mathbb{E}}[\rho(\boldsymbol{x}^{(k)})]   - \underset{\boldsymbol{x}^{(j)} \sim \mathcal{D}_{\mathcal{X}_j}} {\mathbb{E}}[\rho(\boldsymbol{x}^{(j)})] \right\|_{1}  \leqslant 2\epsilon,
\end{aligned}
$$
we have
\begin{equation}
	\label{eq:h-h}
	\begin{aligned}
		\left|\overline{h}_l^{(k)} - \overline{h}_l^{(j)}\right| &= \left| \operatorname{Tr}\left[H_l U(\boldsymbol{\theta}) \left( \overline{\rho}_{M'}^{(k)}  - \overline{\rho}_{M'}^{(j)} \right) U^{\dagger}(\boldsymbol{\theta})  \right] \right| \leqslant 4\epsilon,
	\end{aligned}
\end{equation}
with probability at least $1-8e^{-M \epsilon^2 / 8K}$.

We now derive the lower bound for the cross-entropy function. According to Eq.~\eqref{eq:celoss}, the cross-entropy reads
\begin{align}
	 \mathcal{L}_S(\boldsymbol{\theta}) &=  \frac{1}{M} \sum_{m=1}^{M} \ell (\boldsymbol{\theta};\boldsymbol{x}^{(m)},y^{(m)})  \nonumber \\
	&= - \frac{1}{M} \sum_{m=1}^{M} \sum_{k=1}^{K} \boldsymbol{y}_{k}^{(m)} \ln \left( \frac{\mathrm{e}^{h_{k}(\boldsymbol{x}^{(m)},\boldsymbol{\theta})}}{\sum_{j=1}^{K} \mathrm{e}^{h_{j}(\boldsymbol{x}^{(m)},\boldsymbol{\theta})}}  \right)                    \nonumber \\
	&= \frac{1}{M} \sum_{m=1}^{M} \sum_{k=1}^{K} \boldsymbol{y}_{k}^{(m)} \ln \left( \frac{  \sum_{j=1}^{K} \mathrm{e}^{h_{j}(\boldsymbol{x}^{(m)},\boldsymbol{\theta})} }{\mathrm{e}^{h_{k}(\boldsymbol{x}^{(m)},\boldsymbol{\theta})}} \right)   \nonumber  \\
	&= \frac{1}{M} \sum_{m=1}^{M} \sum_{k=1}^{K} \boldsymbol{y}_k^{(m)} \ln\left[ \sum_{j=1}^{K} \exp\left\{ h_j (\boldsymbol{x}^{(m)},\boldsymbol{\theta}) - h_k(\boldsymbol{x}^{(m)},\boldsymbol{\theta}) \right\} \right] \nonumber \\
	&= \frac{1}{K} \sum_{k=1}^{K} \frac{1}{M'} \sum_{m =1}^{M'}  \ln\left[ \sum_{j=1}^{K} \exp \left\{ h_j(\boldsymbol{x}^{(k,m)},\boldsymbol{\theta}) - h_k(\boldsymbol{x}^{(k,m)},\boldsymbol{\theta}) \right\} \right], \label{eq:Mk} 
\end{align}
where Eq.~\eqref{eq:Mk} arises from the fact that only the feature $\boldsymbol{x}^{(m)}$ from the $k$-th class has $\boldsymbol{y}_k^{(m)} \neq 0$, and in this case, $\boldsymbol{y}_k^{(m)} = 1$.

We then apply Jensen's inequality to the convex function $\ln[\sum_{j=1}^{K} e^{x_j}]$ on $\mathbb{R}^{K}$~\cite{boyd2004convex}. It yields
\begin{align}
\mathcal{L}_S(\boldsymbol{\theta}) &= \frac{1}{K} \sum_{k=1}^{K} \frac{1}{M'} \sum_{m=1}^{M'} \ln\left[ \sum_{j=1}^{K} \exp \left\{ h_j^{(k,m)} - h_k^{(k,m)} \right\} \right] \nonumber \\ &
\geqslant \frac{1}{K} \sum_{k=1}^{K} \ln \left[  \sum_{j=1}^{K} \exp \left\{ \frac{1}{M'} \sum_{m=1}^{M'} \left( h_j^{(k,m)} - h_k^{(k,m)} \right)  \right\} \right] \label{eq:Jensen_1}\\
&= \frac{1}{K} \sum_{k=1}^{K} \ln \left[ \sum_{j=1}^{K} \exp \left\{ \overline{h} _j^{(k)} - \overline{h} _k^{(k)} \right\} \right] \nonumber \\
& \geqslant \ln \left[ \sum_{j=1}^{K} \exp \left\{ \frac{1}{K} \sum_{k=1}^{K} \left( \overline{h} _j^{(k)} - \overline{h} _k^{(k)} \right) \right\}  \right] \label{eq:Jensen_2}  \\
& \geqslant \ln \left[ \sum_{j=1}^{K} \left\{ \frac{1}{K} \sum_{k=1}^{K} \left( \overline{h}_j^{(k)} - \overline{h}_k^{(k)} \right) +1  \right\}  \right] \label{eq:ex} \\
&= \ln \left[ \frac{1}{K} \sum_{j=1}^{K} \sum_{k=1}^{K} \left( \overline{h}_j^{(k)} - \overline{h} _k^{(k)} \right) + K   \right]  \label{eq:next},
\end{align}
where inequalities~\eqref{eq:Jensen_1} and~\eqref{eq:Jensen_2} are derived from Jensen's inequality, and inequality~\eqref{eq:ex} follows from the  inequality $e^{x} \geqslant x+1$.

Since Eq.~\eqref{eq:next} involves a double summation over indices $j$ and $k$, we can change the subscripts of $h_k^{(k)}$ to $h_{j}^{(j)}$. This yields
$$
\begin{aligned}
\mathcal{L}_S(\boldsymbol{\theta}) & \geqslant \ln \left[ \frac{1}{K} \sum_{j=1}^{K} \sum_{k=1}^{K} \left( \overline{h}_j^{(k)} - \overline{h} _k^{(k)} \right) + K   \right] \\
&= \ln\left[ \frac{1}{K} \sum_{k=1}^{K} \sum_{j=1}^{K} \left( \overline{h}_j^{(k)} - \overline{h}_j^{(j)}  \right) + K \right].  \\
\end{aligned}
$$
Finally, combining this with Eq.~\eqref{eq:h-h}, we have$$
\begin{aligned}
\mathcal{L}_S(\boldsymbol{\theta}) &\geqslant \ln \left[ K - 4(K-1) \epsilon  \right] \\
\end{aligned}
$$
with probability at least $1 - 8 e^{- M \epsilon^2 /8K}$.
\end{proof}

\section{Proofs of concentrations}
\label{app:concentration-proof}

\setcounter{lemma}{0}
\renewcommand{\thelemma}{B.\arabic{lemma}}
\setcounter{proposition}{0}
\renewcommand{\theproposition}{B.\arabic{proposition}}
\setcounter{theorem}{0}
\renewcommand{\thetheorem}{B.\arabic{theorem}}
\setcounter{corollary}{0}
\renewcommand{\thecorollary}{B.\arabic{corollary}}

\begin{lemma}[Lemma~\ref{lem:hoffding} in the main text]
	\label{alem:hoffding}
	Given an arbitrary $\epsilon \in (0,1)$, we have
	$$
	\begin{aligned}
	 T \left( \overline{\rho}_M,\mathbb{E}[\rho] \right)  \leq \epsilon 
	\end{aligned}
	$$
	with probability at least $1 - 4 e^{-M \epsilon^{2}/2}$. 
\end{lemma}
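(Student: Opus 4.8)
The plan is to regard the quantity $T\!\left(\overline{\rho}_M,\mathbb{E}[\rho]\right)=\tfrac12\|\overline{\rho}_M-\mathbb{E}[\rho]\|_1$ as a function of the $M$ i.i.d.\ samples $\boldsymbol{x}^{(1)},\dots,\boldsymbol{x}^{(M)}$ and to control it in two stages: first a bounded-differences (McDiarmid) concentration of this quantity around its mean, and then a separate estimate on the mean itself. An equivalent packaging of the first stage is the matrix Hoeffding inequality applied to the centered Hermitian matrices $\rho(\boldsymbol{x}^{(m)})-\mathbb{E}[\rho]$, each of which has operator norm at most $1$ because it is a difference of density operators.

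For the concentration stage I would verify the bounded-differences property directly. Fix an index $i$ and replace $\boldsymbol{x}^{(i)}$ by an independent copy $\tilde{\boldsymbol{x}}^{(i)}$; then $\overline{\rho}_M$ changes by $\tfrac1M\bigl(\rho(\boldsymbol{x}^{(i)})-\rho(\tilde{\boldsymbol{x}}^{(i)})\bigr)$, whose Schatten-$1$ norm is at most $\tfrac1M\bigl(\|\rho(\boldsymbol{x}^{(i)})\|_1+\|\rho(\tilde{\boldsymbol{x}}^{(i)})\|_1\bigr)=\tfrac2M$ since each $\rho(\cdot)$ is a state, so by the reverse triangle inequality for $\|\cdot\|_1$ the value of $T\!\left(\overline{\rho}_M,\mathbb{E}[\rho]\right)$ changes by at most $1/M$. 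McDiarmid's inequality then gives $\Pr\bigl[\,T(\overline{\rho}_M,\mathbb{E}[\rho])\ge\mathbb{E}[T(\overline{\rho}_M,\mathbb{E}[\rho])]+u\,\bigr]\le e^{-2Mu^2}$, together with the matching lower-tail bound. Taking $u=\epsilon/2$ and using the estimate $\mathbb{E}[T(\overline{\rho}_M,\mathbb{E}[\rho])]\le\epsilon/2$ from the next stage yields $T(\overline{\rho}_M,\mathbb{E}[\rho])\le\epsilon$ with probability at least $1-4e^{-M\epsilon^2/2}$, where the constant $4$ absorbs the two tails and the room left for the bound on the mean.

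The remaining ingredient, and the step I expect to be the main obstacle, is showing that $\mathbb{E}[T(\overline{\rho}_M,\mathbb{E}[\rho])]$ is itself of order $\epsilon$. The natural route is to pass to the Hilbert--Schmidt (Schatten-$2$) norm by Jensen's inequality: because the summands $\rho(\boldsymbol{x}^{(m)})-\mathbb{E}[\rho]$ are independent and mean zero, $\mathbb{E}\|\overline{\rho}_M-\mathbb{E}[\rho]\|_2^2=\tfrac1M\bigl(\mathbb{E}\operatorname{Tr}[\rho(\boldsymbol{x})^2]-\operatorname{Tr}[\mathbb{E}[\rho]^2]\bigr)\le\tfrac1M$, where one uses that every amplitude-encoded state is pure, so $\operatorname{Tr}[\rho(\boldsymbol{x})^2]=1$; hence $\mathbb{E}\|\overline{\rho}_M-\mathbb{E}[\rho]\|_2\le 1/\sqrt{M}$. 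The delicate part is upgrading this Hilbert--Schmidt control to a trace-norm bound, since the blunt estimate $\|A\|_1\le\sqrt{\operatorname{rank}A}\,\|A\|_2$ pays a factor of the underlying dimension; to keep the final bound genuinely of order $\epsilon$ one must instead exploit the structure of the perturbation $\overline{\rho}_M-\mathbb{E}[\rho]$ --- for instance by separating the part supported on the (at most $M$-dimensional) span of the sampled states from the complementary part and bounding each by its own rank. Once $\mathbb{E}[T(\overline{\rho}_M,\mathbb{E}[\rho])]\le\epsilon/2$ is established, feeding it back into the McDiarmid estimate of the second paragraph completes the proof.
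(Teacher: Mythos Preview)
Your route differs from the paper's. The paper \emph{linearizes first}: it writes $\|\bar\rho_M-\mathbb{E}[\rho]\|_1=\max_{U\in\mathrm{U}(\mathcal X)}|\langle U,\bar\rho_M-\mathbb{E}[\rho]\rangle|$, splits the inner product into real and imaginary parts, and applies scalar Hoeffding to each part for a fixed $U$. Because these linear statistics have expectation exactly zero, there is no separate ``bound the mean'' stage. You instead apply McDiarmid to the nonlinear functional $T(\bar\rho_M,\mathbb{E}[\rho])$ directly and then try to control its expectation afterwards.

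That expectation step is a genuine gap, and the strategy you sketch for it does not go through. Observe first that $\mathbb{E}\bigl[T(\bar\rho_M,\mathbb{E}[\rho])\bigr]$ depends only on $M$ and the data distribution, not on the free parameter $\epsilon$; so ``$\mathbb{E}[T]\le\epsilon/2$'' would have to follow from a dimension-free estimate of the form $\mathbb{E}[T]\le c/\sqrt{M}$ with an absolute constant $c$. No such estimate holds in this generality. For instance, under the hypotheses of Proposition~\ref{pro:max_mixed} one has $\mathbb{E}[\rho]=I/2^n$, while $\bar\rho_M$ has rank at most $M$; restricting to the orthogonal complement of the span of the sampled states already gives $T(\bar\rho_M,I/2^n)\ge(2^n-M)/2^{n+1}$ deterministically, so $\mathbb{E}[T]$ stays near $1/2$ whenever $M\ll 2^n$, regardless of $\epsilon$. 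Your proposed rank-separation fix cannot help here, because $\mathbb{E}[\rho]$ need not be supported on the $M$-dimensional span of the samples; the difference $\bar\rho_M-\mathbb{E}[\rho]$ is therefore not low-rank, and the inequality $\|A\|_1\le\sqrt{\operatorname{rank}A}\,\|A\|_2$ costs you the full factor $\sqrt{2^n}$. The paper's linearization is precisely what sidesteps this: once you pass to a linear functional of the samples, its mean is identically zero and the issue disappears.
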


\begin{proof}
	For any linear operator $X$ defined on a linear space $\mathcal{X}$, its Schatten-1 norm reads
	\begin{equation*}
	\begin{aligned}
		\left\| X \right\|_{1} = \max_{U \in \mathrm{U}(\mathcal{X})}  |\langle U,X \rangle |,
	\end{aligned}
	\end{equation*}
	where $\mathrm{U}(\mathcal{X})$ denotes the set of unitary operators on space $\mathcal{X}$. Consequently, we can express $$\left\| \bar{\rho}_{M} - \mathbb{E}[\rho] \right\|_{1} = \max_{U \in \mathrm{U}(\mathcal{X})}|\langle U,\bar{\rho}_{M} - \mathbb{E}[\rho] \rangle |,$$ and 
	\begin{equation*}
	\begin{aligned}
		|\langle U,\bar{\rho}_{M} - \mathbb{E}[\rho] \rangle | = |\mathrm{Tr}\left[U^{\dagger} (\bar{\rho}_{M} - \mathbb{E}[\rho])\right] | = | \mathrm{Tr}\left[U^{\dagger} \bar{\rho}_{M}\right]  - \mathrm{Tr}\left[U^{\dagger}\mathbb{E}[\rho]\right] |.
	\end{aligned}
	\end{equation*}
	The complex-valued term $\mathrm{Tr}\left[U^{\dagger} \bar{\rho}_{M}\right]$ can be decomposed into its real and imaginary components as
	\begin{equation*}
	\begin{aligned}
		\mathrm{Tr}\left[U^{\dagger} \bar{\rho}_{M}\right] = \mathrm{Re}\{ \mathrm{Tr}\left[U^{\dagger} \bar{\rho}_{M}\right] \} + \mathrm{i}  \, \mathrm{Im}\{ \mathrm{Tr}\left[U^{\dagger} \bar{\rho}_{M}\right] \}.
	\end{aligned}
	\end{equation*}
	For the real component, we have
	\begin{equation*}
	\begin{aligned}
		\mathrm{Re}\{ \mathrm{Tr}\left[U^{\dagger} \bar{\rho}_{M} \right] \} = \frac{1}{M} \sum_{m=1}^{M} \mathrm{Re}\left\{  \mathrm{Tr}\left[U^{\dagger} \rho(\boldsymbol{x}^{(m)})\right] \right\}.
	\end{aligned}
	\end{equation*}
	Given that \( \left| \mathrm{Tr} \left[ U^{\dagger} \bar{\rho}_M \right] \right| \leq \| U^{\dagger} \|_{\infty} \| \rho \|_1 = 1 \), and considering that for any complex number \( z = a + \mathrm{i}b \), \( |z| = \sqrt{a^2 + b^2} \geq |a| \), we can see that \( \frac{1}{M} \mathrm{Re} \left\{ \mathrm{Tr} \left[ U^{\dagger} \rho(\boldsymbol{x}^{(m)}) \right] \right\} \) is an independent random variable bounded within \( \left[ -\frac{1}{M}, \frac{1}{M} \right] \). Applying Hoeffding's inequality~\cite{mohri2018foundations} yields
	\begin{equation*}
	\begin{aligned}
		\left| \frac{1}{M} \sum_{m=1}^{M} \mathrm{Re} \left\{ \mathrm{Tr} \left[ U^{\dagger} \rho(\boldsymbol{x}^{(m)}) \right] \right\} - \underset{\boldsymbol{x} \sim \mathcal{D}_{\mathcal{X}}}{\mathbb{E}} \left( \mathrm{Re} \left\{ \mathrm{Tr} \left[ U^{\dagger} \rho(\boldsymbol{x}) \right] \right\} \right) \right| \leq \epsilon,
	\end{aligned}
	\end{equation*}
	with probability at least \( 1 - 2 e^{-M \epsilon^2 / 2} \). 
	
By the definitions of \( \bar{\rho}_M \) and \( \mathbb{E}[\rho] \), we obtain $\left| \mathrm{Re} \left\{ \mathrm{Tr} \left[ U^{\dagger} \bar{\rho}_M \right] - \mathrm{Tr} \left[ U^{\dagger} \mathbb{E}[\rho] \right] \right\} \right| \leq \epsilon$
	with probability at least \( 1 - 2 e^{-M \epsilon^2 / 2} \). Following a similar argument, we have $\left| \mathrm{Im} \left\{ \mathrm{Tr} \left[ U^{\dagger} \bar{\rho}_M \right] - \mathrm{Tr} \left[ U^{\dagger} \mathbb{E}[\rho] \right] \right\} \right| \leq \epsilon$ with the same probability  at least \( 1 - 2 e^{-M \epsilon^2 / 2} \).
	
	Therefore, by the triangle inequality, we have
	$$
	\begin{aligned}
		\left| \mathrm{Tr} \left[ U^{\dagger} \bar{\rho}_M \right] - \mathrm{Tr} \left[ U^{\dagger} \mathbb{E}[\rho] \right] \right| &= \left| \mathrm{Re} \left\{ \mathrm{Tr} \left[ U^{\dagger} \bar{\rho}_M \right] - \mathrm{Tr} \left[ U^{\dagger} \mathbb{E}[\rho] \right] \right\} + \mathrm{i} \, \mathrm{Im} \left\{ \mathrm{Tr} \left[ U^{\dagger} \bar{\rho}_M \right] - \mathrm{Tr} \left[ U^{\dagger} \mathbb{E}[\rho] \right] \right\} \right| \\
		&\leq \left| \mathrm{Re} \left\{ \mathrm{Tr} \left[ U^{\dagger} \bar{\rho}_M \right] - \mathrm{Tr} \left[ U^{\dagger} \mathbb{E}[\rho] \right] \right\} \right| + \left| \mathrm{Im} \left\{ \mathrm{Tr} \left[ U^{\dagger} \bar{\rho}_M \right] - \mathrm{Tr} \left[ U^{\dagger} \mathbb{E}[\rho] \right] \right\} \right| \\
		&\leq 2 \epsilon,
	\end{aligned}
	$$
	with probability at least \( 1 - 4 e^{-M \epsilon^2 / 2} \). Finally, we can conclude that
	$$
	\begin{aligned}
		T(\bar{\rho}_M, \mathbb{E}[\rho]) = \frac{1}{2} \left\| \bar{\rho}_M - \mathbb{E}[\rho] \right\|_1 = \frac{1}{2} \max_{U \in \mathrm{U}(\mathcal{X})} \left| \langle U, \bar{\rho}_M - \mathbb{E}[\rho] \rangle \right| \leq \epsilon,
	\end{aligned}
	$$
	with probability at least \( 1 - 4 e^{-M \epsilon^2 / 2} \).

\end{proof}

\begin{proposition}[Proposition~\ref{pro:min-max-con} in the main text]
	\label{pro: min-max-con}
Assume that all elements in the feature $\boldsymbol{x} \in \mathbb{R}^{2^{n}}$ have the same sign,  and the elements satisfy $|x_i|\in[m, M]$. If $\left|\frac{m}{M} -1 \right| < \epsilon$, then after amplitude encoding, we have
	\begin{equation*}
	\begin{aligned}
		T \left( \rho(\boldsymbol{x}) ,\frac{1}{2^{n}} \mathbf{1} \mathbf{1}^{\top}  \right) \leqslant \sqrt{2 \epsilon},
	\end{aligned}
	\end{equation*}
where the state $\frac{1}{2^{n}} \boldsymbol{1} \boldsymbol{1}^{\top} = \ket{+}_{n} \bra{+}_{n}$, with $\ket{+}_n = H^{\otimes n} \ket{0}_{n} $ being the superposition of all computational  basis states.
\end{proposition}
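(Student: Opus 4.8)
The plan is to reduce the statement to a lower bound on the overlap between the two pure states involved, and then estimate that overlap using only the range constraint on the entries of $\boldsymbol{x}$. First I would note that since $\epsilon\in(0,1)$, the hypothesis $|m/M-1|<\epsilon$ forces $m/M>1-\epsilon>0$, so in particular $m>0$ and every $|x_i|\ge m>0$; hence $\boldsymbol{x}\neq\boldsymbol{0}$ and $C_{\boldsymbol{x}}=\sqrt{\sum_i x_i^2}>0$, so $\ket{\phi(\boldsymbol{x})}$ is well defined. Since $\rho(\boldsymbol{x})$ is invariant under $\boldsymbol{x}\mapsto-\boldsymbol{x}$ and all entries share a common sign, we may assume without loss of generality that $x_i\in[m,M]$ with $0<m\le M$.

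Next, both $\rho(\boldsymbol{x})=\ket{\phi(\boldsymbol{x})}\bra{\phi(\boldsymbol{x})}$ and $\frac{1}{2^n}\mathbf{1}\mathbf{1}^\top=\ket{+}_n\bra{+}_n$ are pure, so I would use the closed form for the trace distance between pure states, $T(\ket{\psi}\bra{\psi},\ket{\varphi}\bra{\varphi})=\sqrt{1-|\langle\psi|\varphi\rangle|^2}$. The relevant overlap is $\langle\phi(\boldsymbol{x})|+\rangle_n=\frac{1}{C_{\boldsymbol{x}}\sqrt{2^n}}\sum_{i=0}^{2^n-1}x_i$, so the fidelity is $F:=|\langle\phi(\boldsymbol{x})|+\rangle_n|^2=\frac{(\sum_i x_i)^2}{2^n\sum_i x_i^2}$.

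The core step is lower bounding $F$. Using $x_i\ge m$ for all $i$ gives $\sum_i x_i\ge 2^n m$, and using $x_i\le M$ gives $\sum_i x_i^2\le 2^n M^2$; dividing yields $F\ge\frac{(2^n m)^2}{2^n\cdot 2^n M^2}=(m/M)^2>(1-\epsilon)^2$. Therefore $T\!\left(\rho(\boldsymbol{x}),\frac{1}{2^n}\mathbf{1}\mathbf{1}^\top\right)=\sqrt{1-F}<\sqrt{1-(1-\epsilon)^2}=\sqrt{2\epsilon-\epsilon^2}\le\sqrt{2\epsilon}$, which is exactly the claimed bound.

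I do not expect a genuine obstacle; the only point requiring care is the choice of elementary estimates in the last step. One could instead invoke the P\'olya--Szeg\H{o} reverse Cauchy--Schwarz inequality to bound $F$ from below, but the crude bounds $\sum_i x_i\ge 2^n m$ and $\sum_i x_i^2\le 2^n M^2$ already give the clean constant $\sqrt{2\epsilon}$. It is also worth observing that this estimate degrades as the spread $M/m$ grows, matching the paper's interpretation that amplitude encoding discards the overall scale of each feature vector.
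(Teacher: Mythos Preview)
Your proof is correct and follows the same overall strategy as the paper: reduce to the pure-state trace distance formula $T=\sqrt{1-|\langle\phi(\boldsymbol{x})|+\rangle_n|^2}$ and then lower-bound the overlap. The only difference is in how the overlap is estimated. The paper bounds each normalized amplitude $|x_i|/C_{\boldsymbol{x}}$ from below by $\bigl(1+(2^n-1)(M/m)^2\bigr)^{-1/2}$, introduces an auxiliary function $f_{\min}(m/M)$ to show this is at least $\sqrt{(1-2\epsilon)/2^n}$, and then sums to obtain $|\langle\phi(\boldsymbol{x})|+\rangle_n|\ge\sqrt{1-2\epsilon}$. Your route is more direct: the two global bounds $\sum_i x_i\ge 2^n m$ and $\sum_i x_i^2\le 2^n M^2$ immediately give $F\ge(m/M)^2>(1-\epsilon)^2$, which is even slightly sharper than the paper's $F\ge 1-2\epsilon$ before both are relaxed to $T\le\sqrt{2\epsilon}$. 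The gain is purely in economy; the key reduction is identical.
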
		

\begin{proof}
	For the feature vector  $\boldsymbol{x} = [x_0,x_1,\cdots,x_{2^{n}-1}]^{\top}$, denote that the encoded state vector as $\ket{\psi(\boldsymbol{x})}$, with the density matrix as $\rho(\boldsymbol{x})$. Then, the probability amplitude of the encoded state correspondingto the basis vector $\ket{i}$ reads
	$$
	\begin{aligned}
		\alpha_i = \sqrt{\frac{x_i^2}{x_0^2 + x_1^2 + \cdots x_{2^{n}-1}^2}}.
	\end{aligned}
	$$
The 	corresponding probability of projection into the basis vector $\ket{i}$ is
	$$
	\begin{aligned}
		p_i = |\alpha_i|^2 = \frac{x_i^2}{x_0^2 + x_1^2 + \cdots x_{2^{n}-1}^2} = \frac{1}{1 + (x_0 / x_i)^2  + \cdots + (x_{2^{n}-1} / x_{i})^2 }.
	\end{aligned}
	$$
If  $|x_i|\in[m, M]$, then
		$$
	\begin{aligned}
		p_i = \frac{1}{1 + (x_0 / x_i)^2  + \cdots + (x_{2^{n}-1} / x_{i})^2 } \geqslant \frac{1}{1 + (2^{n}-1) (M / m)^2 }.
	\end{aligned}
	$$
	Let $x = \frac{m}{M}$, where $x \in (0,1]$, and define $f_{\min}(x) = \frac{x^2}{x^2 + (2^{n}-1)}$. Thus, we have
	$$
	\begin{aligned}
		\left| 2^nf_{\min}(x) - 1 \right| &= \left| \frac{2^{n} x^2 - x^2 - (2^{n}-1)}{x^2 + (2^{n}-1)} \right|  \\
        &= \left| \frac{(2^{n}-1)(x^2 - 1)}{x^2 + (2^{n}-1)} \right|  \\
        &\leqslant |x^2 - 1|\\
        & \leqslant 2 \epsilon.
	\end{aligned}
	$$
	Thus,  $f_{\min}(\frac{m}{M}) = \frac{1}{1 + (2^{n}-1) (M / m)^2 } \geqslant \frac{1}{2^{n}} \cdot (1 - 2 \epsilon).$

The absolute value of the inner product between the encoded state vector $\ket{\psi(\boldsymbol{x})}$ and the $n$-qubit superposition state $\ket{+}_n$ satisfies
	$$
	\begin{aligned}
	    \Big|\langle \psi(\boldsymbol{x})|+ \rangle_n \Big| &= \left| \sum_{i=0}^{2^{n}-1}  \frac{1}{\sqrt{2^{n}}} \frac{x_i}{\sqrt{x_0 ^2 + \cdots + x_{2^{n}-1}}} \right| \\
    &=  \frac{1}{\sqrt{2^{n}}} \sum_{i=0}^{2^{n}-1} \left| \frac{x_i}{\sqrt{x_0^2  + \cdots + x_{2^{n}-1}^2}}  \right| \\
    &= \frac{1}{\sqrt{2^{n}}} \sum_{i=0}^{2^{n}-1} \sqrt{\frac{1}{ 1 + (x_0 / x_i) ^2 + \cdots + (x_{2^{n}-1} / x_i)^2}}  \\
    &\geqslant \frac{1}{\sqrt{2^{n}}} \sum_{i=0}^{2^{n}-1} \sqrt{\frac{1}{1 + (2^{n} - 1) (M / m)^2}} \\
    & \geqslant \sum_{i=0}^{2^{n}-1} \frac{1}{\sqrt{2^{n}}} \cdot \frac{1}{\sqrt{2^{n}}} \cdot \sqrt{1 - 2 \epsilon} \\
    & \geqslant \sqrt{1 - 2 \epsilon}.
	\end{aligned}
	$$
	Since both the encoded state $\ket{\psi(\boldsymbol{x})}$ and the superposition state $\ket{+}_n$ are pure, we have
	$$
	\begin{aligned}
	    T \left( \rho(\boldsymbol{x}) ,\frac{1}{2^{n}} \mathbf{1} \mathbf{1}^{\top}  \right) = \frac{1}{2}\left\|\rho(\boldsymbol{x}) - \frac{1}{2^{n}} \mathbf{1} \mathbf{1}^{\top}  \right\|_{1}  &=  \sqrt{1 - |\langle \psi(\boldsymbol{x}) | + \rangle_n |^2} \leqslant \sqrt{2 \epsilon}.
	\end{aligned}
	$$
\end{proof}

\begin{proposition}[Proposition~\ref{pro:max_mixed} in the main text]
	Denote by $\mathcal{D}_{\mathcal{X}}$ the distribution of the feature $\boldsymbol{x} \in \mathbb{R}^{2^{n}}$. Assume that the elements of $\boldsymbol{x}$ are i.i.d with an expected value of $0$. In addition, the distribution is symmetric, i.e., $p(x_j) = p(-x_j)$, for all $j \in [0:2^{n}-1]$. Then after amplitude encoding, the expectation of  encoded state is the maximally mixed state, i.e.,
	\begin{equation*}
	\begin{aligned}
	\underset{\boldsymbol{x} \sim  \mathcal{D}_{\mathcal{X}}}{\mathbb{E}}\left[\rho(\boldsymbol{x})\right] = \frac{\boldsymbol{I}}{2^{n}}.
	\end{aligned}
	\end{equation*}
\end{proposition}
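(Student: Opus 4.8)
The plan is to compute the matrix $\mathbb{E}_{\boldsymbol{x}\sim\mathcal{D}_{\mathcal{X}}}[\rho(\boldsymbol{x})]$ entrywise in the computational basis and exploit the two symmetries built into the hypotheses: the exchangeability of the coordinates (from the i.i.d.\ assumption) and the reflection symmetry $p(x_j)=p(-x_j)$ of each single coordinate. From Eq.~\eqref{eq:amp}, for every pair $(i,j)$ one has $\bra{i}\,\mathbb{E}[\rho(\boldsymbol{x})]\,\ket{j} = \mathbb{E}_{\boldsymbol{x}\sim\mathcal{D}_{\mathcal{X}}}\big[ x_i x_j / \sum_{k=0}^{2^n-1} x_k^2\big]$. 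The integrand is bounded — for $i\neq j$, $|x_ix_j|/\sum_k x_k^2 \leq |x_ix_j|/(x_i^2+x_j^2)\leq \tfrac12$ by AM--GM, and $x_i^2/\sum_k x_k^2 \leq 1$ — so all of these expectations are finite, and since the marginals have densities the degenerate event $\boldsymbol{x}=\boldsymbol{0}$ has probability zero, so $\rho(\boldsymbol{x})$ is almost surely well defined.

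Next I would show the off-diagonal entries vanish. Fix $i\neq j$ and consider the map $R_j\colon\mathbb{R}^{2^n}\to\mathbb{R}^{2^n}$ that negates the $j$-th coordinate and fixes all the others. Since the coordinates are independent and $p(x_j)=p(-x_j)$, the map $R_j$ pushes $\mathcal{D}_{\mathcal{X}}$ forward to itself; moreover $R_j$ leaves $\sum_k x_k^2$ and $x_i$ unchanged while sending $x_ix_j \mapsto -x_ix_j$. Performing the change of variables $\boldsymbol{x}\mapsto R_j(\boldsymbol{x})$ in the integral gives $\bra{i}\mathbb{E}[\rho]\ket{j} = -\bra{i}\mathbb{E}[\rho]\ket{j}$, hence $\bra{i}\mathbb{E}[\rho]\ket{j}=0$. (Only the symmetry hypothesis is used here; the zero-mean hypothesis is in fact a consequence of it.)

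It then remains to pin down the diagonal entries. Because the coordinates are i.i.d., for any permutation $\pi$ of $\{0,\dots,2^n-1\}$ the corresponding coordinate permutation preserves $\mathcal{D}_{\mathcal{X}}$ and leaves $\sum_k x_k^2$ invariant; therefore $c := \mathbb{E}[x_i^2/\sum_k x_k^2]$ is the same for every $i$. Summing over $i$ and using $\sum_i x_i^2 = \sum_k x_k^2$ (equivalently, $\operatorname{Tr}\mathbb{E}[\rho(\boldsymbol{x})]=1$) gives $2^n c = 1$, so $c = 2^{-n}$. Combining this with the previous step, $\mathbb{E}[\rho(\boldsymbol{x})] = 2^{-n}\sum_{i=0}^{2^n-1}\ket{i}\bra{i} = \boldsymbol{I}/2^n$.

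I do not expect any genuine obstacle: the whole argument is driven by two elementary changes of variables (the single-coordinate sign flip and a coordinate permutation) together with the trace-one constraint. The only points that need a word of care are the finiteness of the expectations, which is handled by the elementary bounds in the first step, and the null event $\boldsymbol{x}=\boldsymbol{0}$; if one wished to allow the distribution to place mass at the origin, one would simply condition on $\boldsymbol{x}\neq\boldsymbol{0}$ throughout and nothing else changes.
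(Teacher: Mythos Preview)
Your proposal is correct and follows essentially the same approach as the paper: both compute $\mathbb{E}[\rho(\boldsymbol{x})]$ entrywise, use the single-coordinate sign flip (symmetry of $p$) to kill the off-diagonal terms, and use exchangeability of the i.i.d.\ coordinates together with the trace-one constraint to force every diagonal entry to equal $2^{-n}$. Your write-up is a bit more careful about integrability and the null event $\boldsymbol{x}=\boldsymbol{0}$, and you correctly observe that the zero-mean hypothesis is already implied by the symmetry assumption, but the underlying argument is the same.
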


\begin{proof}
    For the feature vector $\boldsymbol{x} = [x_0,x_1,\cdots,x_{2^{n}-1}]^{\top}$, according to the definition of amplitude encoding in Eq.~\eqref{eq:amp}, the expectation of encoded states reads
    \begin{equation*}
    \begin{aligned}
        \underset{\boldsymbol{x} \sim \mathcal{D}_{\mathcal{X}}}{\mathbb{E}} \left[\rho (\boldsymbol{x}) \right] &= \underset{\boldsymbol{x} \sim \mathcal{D}_{\mathcal{X}}}{\mathbb{E}}\left(\sum_{i=0}^{2^{n}-1}  \sum_{i=0}^{2^{n}-1}  \frac{x_{i } x_{j}}{C_{\boldsymbol{x}}^2} \ket{i} \bra{j} \right)  \\
		&= \sum_{i=0}^{2^{n}-1}  \sum_{j=0}^{2^{n}-1}\underset{\boldsymbol{x} \sim \mathcal{D}_{\mathcal{X}}}{\mathbb{E}}\left( \frac{x_{i} x_{j}}{x_0^2 +  + \cdots + x_{2^{n}-1}^2}  \right) \ket{i} \bra{j}.  \\
    \end{aligned}
    \end{equation*}

    (1) Consider the diagonal elements:
    \begin{equation*}
        \begin{aligned}
            \underset{\boldsymbol{x} \sim \mathcal{D}_{\mathcal{X}}}{\mathbb{E}}\left( \frac{x_i^2}{x_0^2 +  + \cdots + x_{2^{n}-1}^2}  \right)  &= \underset{\boldsymbol{x} \sim \mathcal{D}_{\mathcal{X}}}{\mathbb{E}}\left( 1  \right) - \underset{\boldsymbol{x} \sim \mathcal{D}_{\mathcal{X}}}{\mathbb{E}}\left(  \frac{x_0^2 + \cdots + x_{i-1}^2 + x_{i+1}^2 + \cdots + x_{2^{n}-1}^2}{x_0^2  +\cdots + x_{2^{n}-1}^2} \right) \\
            &= 1 - \sum_{j=0,j\neq i}^{2^{n}-1}  \underset{\boldsymbol{x} \sim \mathcal{D}_{\mathcal{X}}}{\mathbb{E}}\left(\frac{x_j^2}{x_0^2  + \cdots + x_{2^{n}-1}^2}  \right).  \\
        \end{aligned}
        \end{equation*}
Note that the elements $x_j$s' are independent and identically distributed. We assume that $\mathbb{E}_{\boldsymbol{x} \sim \mathcal{D}_{\mathcal{X}}}\left(\frac{x_j^2}{x_0^2  + \cdots + x_{2^{n}-1}^2} \right) = a$. Then we have $a = 1 - (2^{n} -1) a$, and $a = \frac{1}{2^{n}}$, accordingly. Thus, for all diagonal elements, we have $\mathbb{E}_{\boldsymbol{x} \sim \mathcal{D}_{\mathcal{X}}}\left( \frac{x_{i}^2}{C_{\boldsymbol{x}}^2}  \right) = \frac{1}{2^{n}}$.

    (2) Consider the non-diagonal elements: 
    \begin{equation*}
    \begin{aligned}
    	\underset{\boldsymbol{x} \sim \mathcal{D}_{\mathcal{X}}}{\mathbb{E}}\left( \frac{x_{i} x_{j}}{C_{\boldsymbol{x}}^2}  \right) &= \underset{\boldsymbol{x} \sim \mathcal{D}_{\mathcal{X}}}{\mathbb{E}}\left( \frac{x_{i} x_{j}}{x_0^2 + \cdots + x_{2^{n}-1}^2}   \right). \\
    \end{aligned}
    \end{equation*}
    Since  $\frac{(-x_{i}) x_{j}}{x_0^2 + \cdots + x_{2^{n}-1}^2}  = -  \frac{x_{i} x_{j}}{x_0^2  + \cdots + x_{2^{n}-1}^2}$, and the distribution of $x_j$ is symmetric, we have
    \begin{equation*}
    \begin{aligned}
        \underset{\boldsymbol{x} \sim \mathcal{D}_{\mathcal{X}}}{\mathbb{E}}\left(\frac{x_{i} x_{j}}{C_{\boldsymbol{x}}^2}  \right) &= \underset{\boldsymbol{x} \sim \mathcal{D}_{\mathcal{X}}}{\mathbb{E}}\left(\frac{x_{i} x_{j}}{x_0^2  + \cdots + x_{2^{n}-1}^2}  \right) = 0.
    \end{aligned}
    \end{equation*}

    Therefore,    
	  \begin{equation*}
		\begin{aligned}
		\underset{\boldsymbol{x} \sim \mathcal{D}_{\mathcal{X}}}{\mathbb{E}}\left( \frac{x_{i} x_{j}}{C_{\boldsymbol{x}}^2}  \right)  = \begin{cases}
			\frac{1}{2^{n}} & i = j; \\
			0 &  i \neq j.
		\end{cases}
		\end{aligned}
	  \end{equation*}
 This implies that the expectation of encoded states is the maximally mixed state, i.e., $\underset{\boldsymbol{x} \sim \mathcal{D}_{\mathcal{X}}}{\mathbb{E}}\left[ \rho(\boldsymbol{x}) \right] = \frac{I}{2^{n}}$.
\end{proof}

\begin{proposition}[Proposition~\ref{pro:binary_situation} in the main text]
	For a binary classification dataset, denote by $\mathcal{D}_{\mathcal{X}_i}$ the distribution of feature $\boldsymbol{x}\in \mathbb{R}^{2^{n}}$ in class $i$, for $i=1,2$.  If for any $j \in [0:2^{n}-1]$, the probability density functions $p_{i,j}(x)$ of the $j$-th element in class $i$ satisfy  $p_{1,j}(x) = -p_{2,j}(x)$, then after amplitude encoding the expected states of the two classes are identical, i.e.,
	\begin{equation*}
		\begin{aligned}
		\underset{\boldsymbol{x} \sim \mathcal{D}_{\mathcal{X}_1}}{\mathbb{E}}\left[\rho(\boldsymbol{x})\right] = \underset{\boldsymbol{x} \sim \mathcal{D}_{\mathcal{X}_2}}{\mathbb{E}}\left[\rho(\boldsymbol{x})\right].
		\end{aligned}
	\end{equation*}
\end{proposition}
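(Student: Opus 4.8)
The plan is to reduce the claim to a single symmetry of amplitude encoding together with a change of variables. First I would note that the encoded state is invariant under a global sign flip of the feature vector: from Eq.~\eqref{eq:amp} the $(i,j)$ entry of $\rho(\boldsymbol{x})$ equals $x_i x_j/C_{\boldsymbol{x}}^2 = x_i x_j/(x_0^2+\cdots+x_{2^n-1}^2)$, and replacing $\boldsymbol{x}$ by $-\boldsymbol{x}$ leaves the numerator unchanged (it picks up $(-1)^2=1$) and the denominator unchanged, so $\rho(-\boldsymbol{x})=\rho(\boldsymbol{x})$ for every $\boldsymbol{x}$. Second, reading the hypothesis as $p_{1,j}(x)=p_{2,j}(-x)$ for all $j$ (the printed $p_{1,j}(x)=-p_{2,j}(x)$ must be a typo, since densities are nonnegative), and using that the coordinates are independent within each class as in the examples of Fig.~\ref{fig:example_3}, the joint densities $q_i$ of $\mathcal{D}_{\mathcal{X}_i}$ obey $q_2(\boldsymbol{x})=\prod_j p_{2,j}(x_j)=\prod_j p_{1,j}(-x_j)=q_1(-\boldsymbol{x})$; equivalently, $\mathcal{D}_{\mathcal{X}_2}$ is the pushforward of $\mathcal{D}_{\mathcal{X}_1}$ under $\boldsymbol{x}\mapsto-\boldsymbol{x}$.

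Combining these, I would write $\mathbb{E}_{\boldsymbol{x}\sim\mathcal{D}_{\mathcal{X}_2}}[\rho(\boldsymbol{x})]=\int\rho(\boldsymbol{x})\,q_1(-\boldsymbol{x})\,d\boldsymbol{x}$ and substitute $\boldsymbol{x}\to-\boldsymbol{x}$ (a linear change of variables whose Jacobian has absolute value $1$) to get $\int\rho(-\boldsymbol{x})\,q_1(\boldsymbol{x})\,d\boldsymbol{x}=\int\rho(\boldsymbol{x})\,q_1(\boldsymbol{x})\,d\boldsymbol{x}=\mathbb{E}_{\boldsymbol{x}\sim\mathcal{D}_{\mathcal{X}_1}}[\rho(\boldsymbol{x})]$, which is the assertion. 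Alternatively one can argue entrywise, observing directly that $x_i x_j/C_{\boldsymbol{x}}^2$ is an even function of $\boldsymbol{x}$ for both diagonal and off-diagonal $(i,j)$, mirroring the diagonal/off-diagonal bookkeeping in the proof of Proposition~\ref{pro:max_mixed}.

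The computation itself is routine; the only real obstacle is interpretational. The marginal condition by itself does not force $\mathbb{E}_{\mathcal{D}_{\mathcal{X}_1}}[\rho]=\mathbb{E}_{\mathcal{D}_{\mathcal{X}_2}}[\rho]$ — if the coordinates were allowed to be correlated one can build a distribution pair with exactly the stated marginal symmetry yet with $\mathbb{E}_{\mathcal{D}_{\mathcal{X}_1}}[\rho]\neq\mathbb{E}_{\mathcal{D}_{\mathcal{X}_2}}[\rho]$ — so the implicit independence of the within-class features (present in all the illustrative datasets) has to be made explicit, and the sign typo corrected, before the pushforward identity $q_2=q_1\circ(-\mathrm{id})$ and hence the change-of-variables argument go through cleanly.
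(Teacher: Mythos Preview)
Your proposal is correct and follows essentially the same route as the paper: the paper's proof also writes the joint density as a product $\prod_k p_{i,k}(x_k)$ (so it too silently assumes coordinate independence), computes the expectation entrywise as $\int \frac{x_ix_j}{\|\boldsymbol{x}\|^2}\prod_k p_{1,k}(x_k)\,d\boldsymbol{x}$, and then substitutes $\boldsymbol{x}\to-\boldsymbol{x}$ to turn each $p_{1,k}$ into $p_{2,k}$, picking up a harmless $(-1)^{2^n}=1$ factor. Your diagnoses of the sign typo (the paper in effect uses $p_{1,j}(x)=p_{2,j}(-x)$) and of the implicit independence assumption are both accurate and match exactly what the paper's own argument relies on without stating.
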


\begin{proof}
    For the feature vector $\boldsymbol{x} = [x_0,x_1,\cdots,x_{2^{n}-1}]^{\top}$, let $[\rho(\boldsymbol{x})]_{ij}$ denote the $(i,j)$-th element of $\rho(\boldsymbol{x})$. Then, according to the definition of amplitude encoding in Eq.~\eqref{eq:amp}, we have 
	$$
	\begin{aligned}
	&\underset{\boldsymbol{x} \sim \mathcal{D}_{\mathcal{X}_1}}{\mathbb{E}}\left\{  [\rho(\boldsymbol{x})]_{ij} \right\} = \underset{\boldsymbol{x} \sim \mathcal{D}_{\mathcal{X}_1}}{\mathbb{E}}\left( \frac{x_{i} x_{j}}{C_{\boldsymbol{x}}^2}  \right) \\
	  =& \int_{}^{ }  \frac{x_i x_j}{x_{0}^2 + \cdots + x_{2^{n}-1}^2} p_{1,0}(x_0)  \cdots  p_{1,2^{n}-1}(x_{2^{n}-1}) ~\mathrm{d}x_0 \cdots  ~\mathrm{d}x_{2^{n}-1}  \\
	  =& (-1)^{2^{n}}\int_{}^{ }   \frac{x_i x_j}{x_{0}^2  + \cdots + x_{2^{n}-1}^2} p_{2,0}(x_0)\cdots p_{2,2^{n}-1}(x_{2^{n}-1}) ~\mathrm{d}x_0 \cdots   ~\mathrm{d}x_{2^{n}-1}  \\
	  =&\int_{}^{ }   \frac{x_i x_j}{x_{0}^2  + \cdots + x_{2^{n}-1}^2} p_{2,0}(x_0) \cdots p_{2,2^{n}-1}(x_{2^{n}-1}) ~\mathrm{d}x_0 \cdots  ~\mathrm{d}x_{2^{n}-1}  \\
	  =&\underset{\boldsymbol{x} \sim \mathcal{D}_{\mathcal{X}_2}}{\mathbb{E}}\left\{  [\rho(\boldsymbol{x})]_{ij} \right\}. \\ 
	\end{aligned}
	$$
	
	Therefore, 
$
	\begin{aligned}
	 \underset{\boldsymbol{x} \sim \mathcal{D}_{\mathcal{X}_1}}{\mathbb{E}}\left[\rho(\boldsymbol{x})\right] =  \underset{\boldsymbol{x} \sim \mathcal{D}_{\mathcal{X}_2} }{\mathbb{E}}\left[\rho(\boldsymbol{x})\right].
	\end{aligned}
	$

	\end{proof}
	
\section{More numerical validations}
\label{app:trainability-validation}

\setcounter{figure}{0}
\renewcommand{\thefigure}{C.\arabic{figure}}
We employ the same numerical settings as those in Subsec.~\ref{sub:trainability} and consider three different layer sizes: $L=1, 10$, and $30$, for the PQC illustrated in Fig.~\ref{fig:training_loss}.(a). For each layer size, we train a quantum classifier, and  depict the training loss and accuracy for the three datasets (Fig.~\ref{fig:example_1}(a), Fig.~\ref{fig:example_2}(a), and Fig.~\ref{fig:example_3}(a))  in Fig.~\ref{fig:loss_1_layer}, Fig.~\ref{fig:loss_10_layer}, and Fig.~\ref{fig:loss_30_layer}, respectively. It is clear that the training loss quickly converges to $\ln 2$, which corresponds to the random classification case, and the training accuracy remains around 0.5.

\begin{figure}[htpb]
	\centering
	\includegraphics[width=0.86\textwidth]{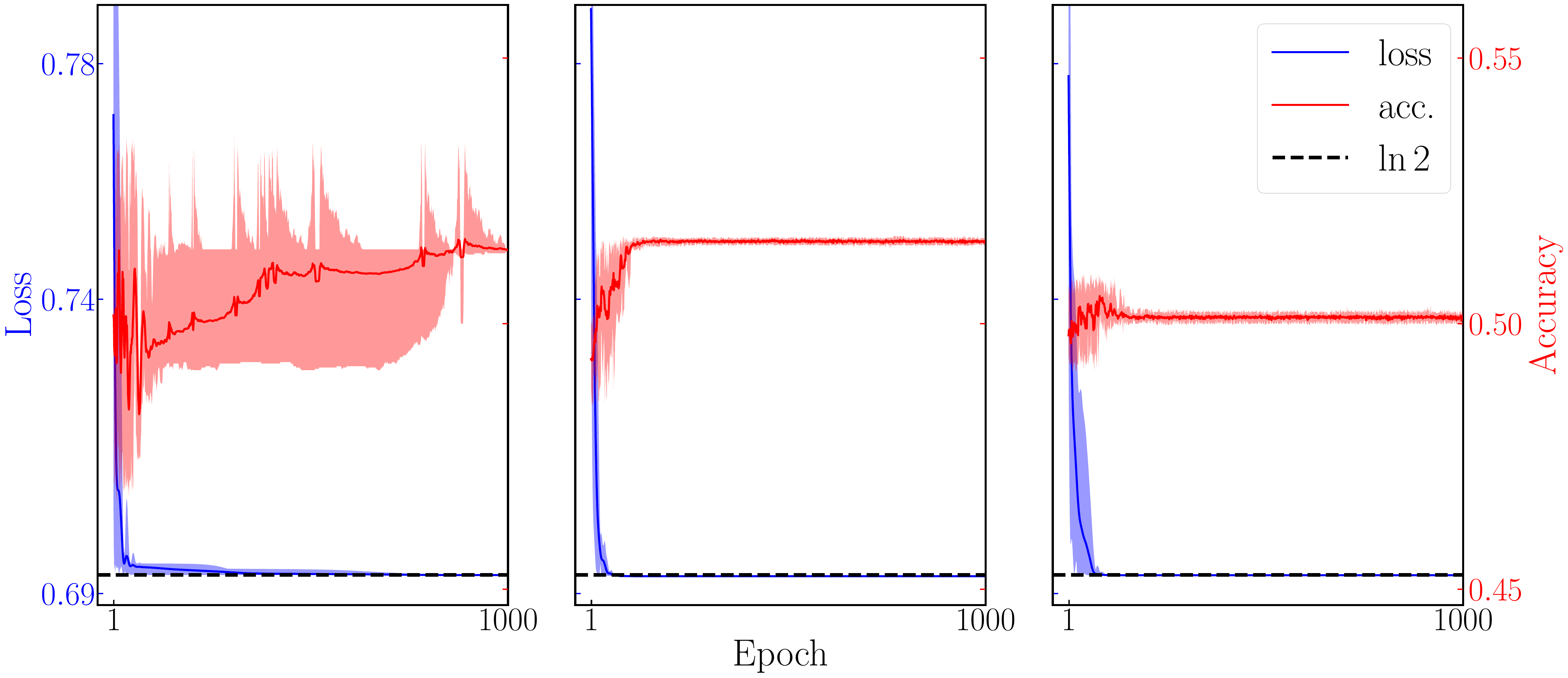}
	\caption{Performance of the quantum classifier trained by a 1-layer circuit. From left to right, the figures show the training loss and accuracy for the datasets in Fig.~\ref{fig:example_1}(a), Fig.~\ref{fig:example_2}(a), and Fig.~\ref{fig:example_3}(a), respectively. The solid blue and red line, respectively, represent the mean of the training loss and accuracy over 10 runs, and the shaded areas represent their respective ranges  across the 10 runs.}
	\label{fig:loss_1_layer}
\end{figure}

\begin{figure}[htpb]
	\centering
	\includegraphics[width=0.86\textwidth]{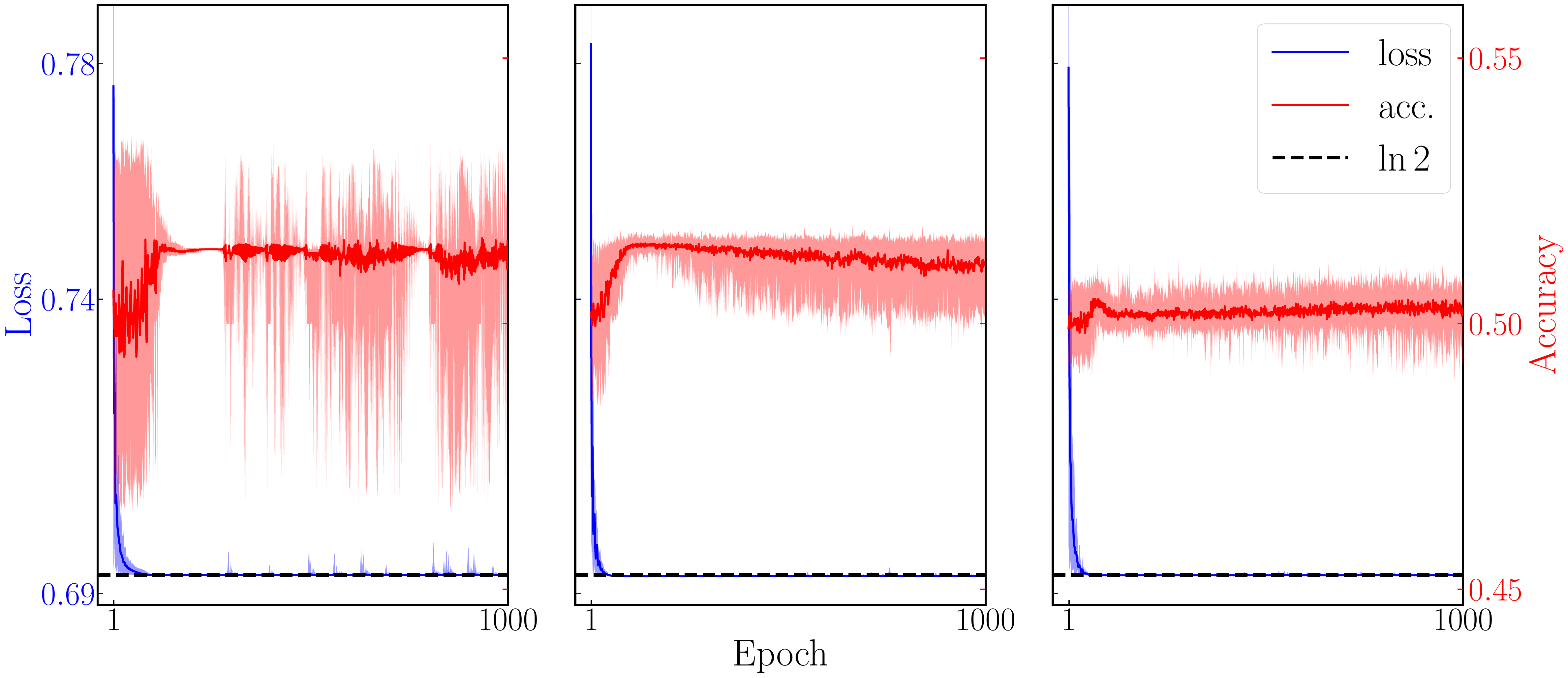}
	\caption{Performance of the quantum classifier trained by a 10-layer circuit. From left to right, the figures show the training loss and accuracy for the datasets in Fig.~\ref{fig:example_1}(a), Fig.~\ref{fig:example_2}(a), and Fig.~\ref{fig:example_3}(a), respectively. The solid blue and red line, respectively, represent the mean of the training loss and accuracy over 10 runs, and the shaded areas represent their respective ranges  across the 10 runs.}
	\label{fig:loss_10_layer}
\end{figure}

\begin{figure}[htpb]
	\centering
	\includegraphics[width=0.86\textwidth]{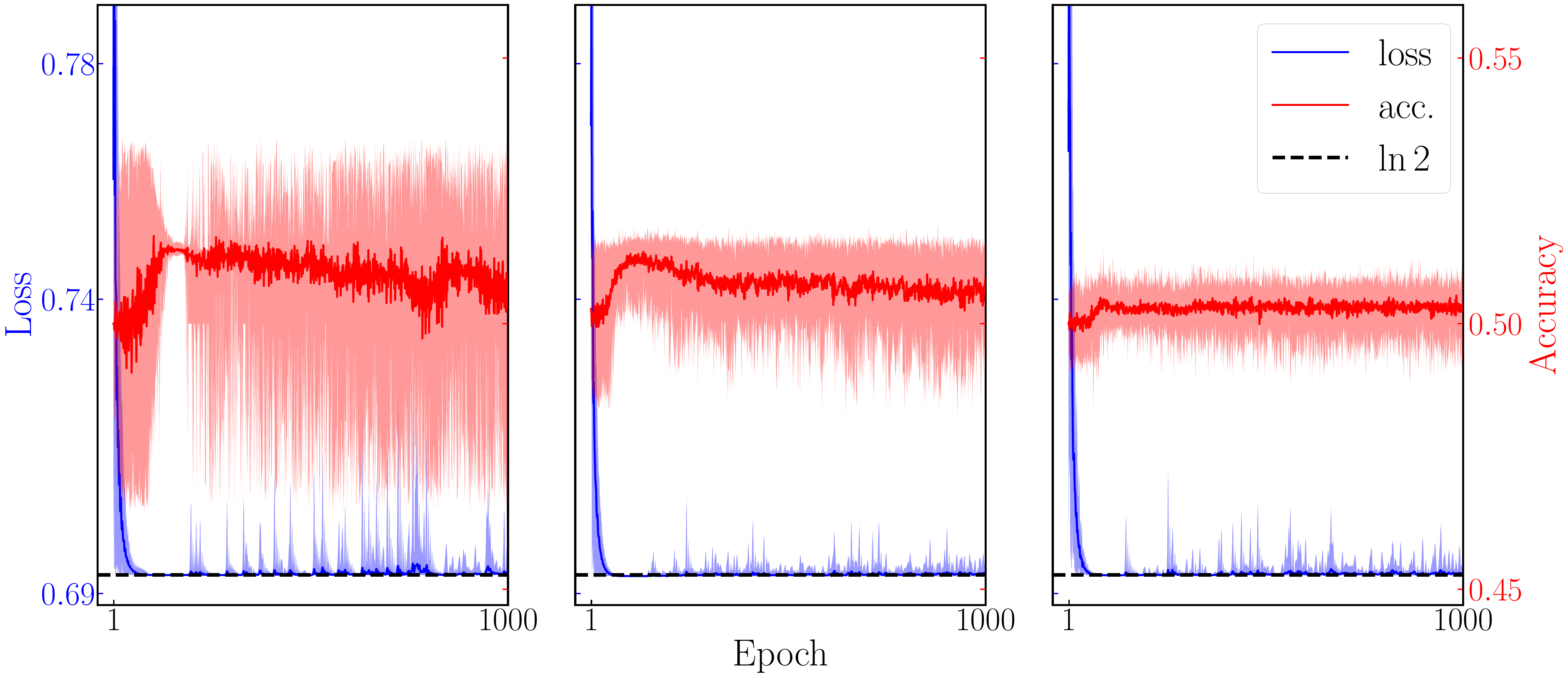}
	\caption{Performance of the quantum classifier trained by a 30-layer circuit. From left to right, the figures show the training loss and accuracy for the datasets in Fig.~\ref{fig:example_1}(a), Fig.~\ref{fig:example_2}(a), and Fig.~\ref{fig:example_3}(a), respectively. The solid blue and red line, respectively, represent the mean of the training loss and accuracy over 10 runs, and the shaded areas represent their respective ranges  across the 10 runs.}
	\label{fig:loss_30_layer}
\end{figure}

\section{QCNN circuit diagram}
\label{app:QCNN}
\setcounter{figure}{0}
\renewcommand{\thefigure}{D.\arabic{figure}}
The 10-qubit QCNN circuit  is shown in Fig.~\ref{fig:QCNN}. After amplitude encoding, the PQC consists of four convolutional layers (Conv) and pooling layers (Pool). In the pooling layers, one of the paired qubits is measured, and conditioned on the measurement result, a rotation is applied to the remaining qubit. The two-qubit gates are $XX(\theta) = e^{-i \frac{\theta}{2} (X \otimes X)}$, $YY(\theta) = e^{-i \frac{\theta}{2} (Y \otimes Y)}$, and $ZZ(\theta) = e^{-i \frac{\theta}{2} (Z \otimes Z)}$. The single-qubit gates are $R_x(\theta) = e^{-i \frac{\theta}{2} X}$, $ R_z(\theta) = e^{-i \frac{\theta}{2} Z}$, where $X$, $Y$, and $Z$ are Pauli operators. The parameters of the variational gates are independent.

\begin{figure}[htpb]
	\centering
	\includegraphics[width=0.64\textwidth]{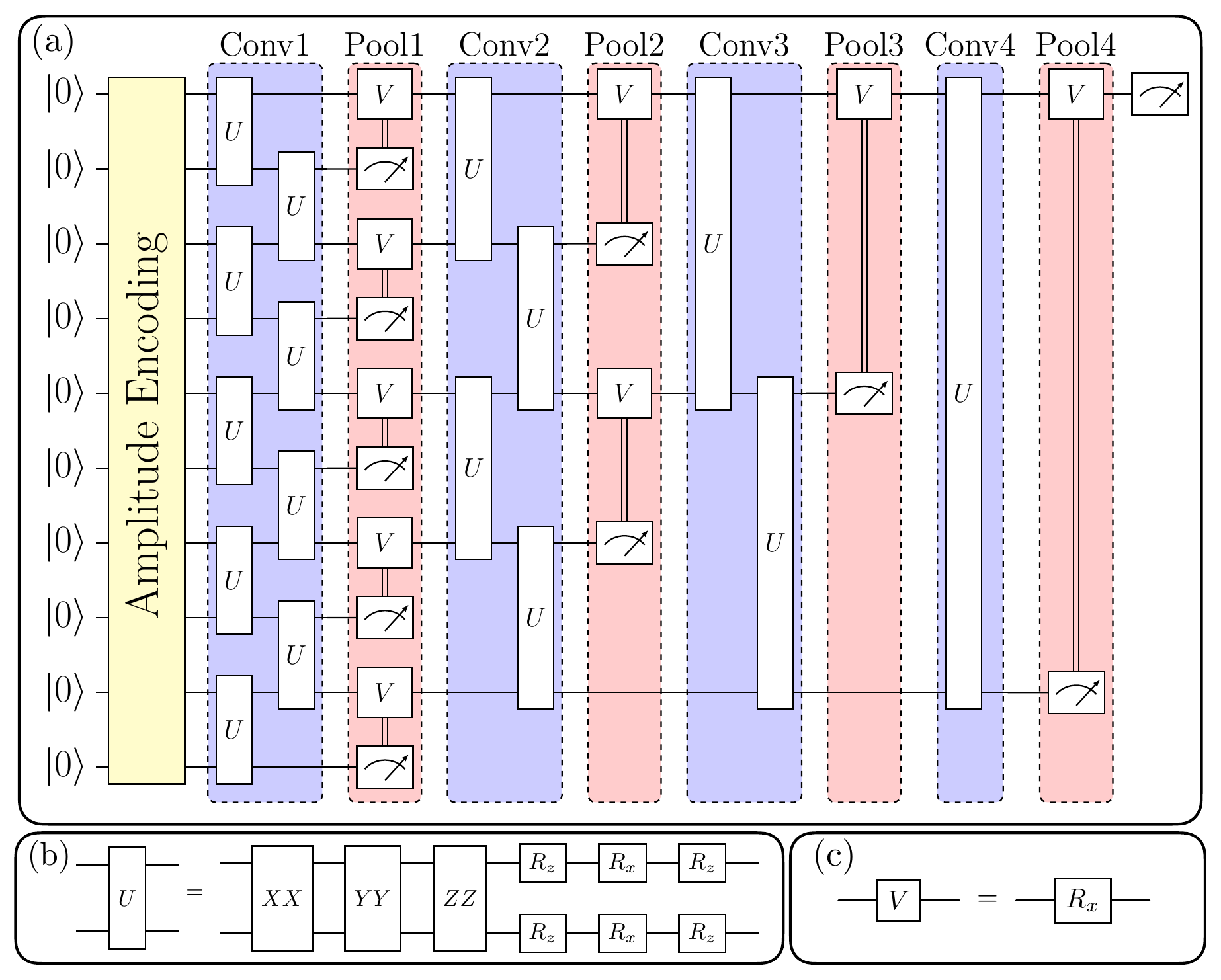}
	\caption{(a) The architecture of QCNN. (b) The circuit for gate $U$. (c) The circuit for gate $V$.}
	\label{fig:QCNN}
\end{figure}

\end{document}